\documentclass[11pt]{article}

\usepackage[a4paper,margin=1in]{geometry}
\usepackage[T1]{fontenc}
\usepackage[utf8]{inputenc}
\usepackage{lmodern}

\usepackage{amsmath,amssymb,amsthm,mathtools}
\usepackage{enumitem}
\usepackage{hyperref}
\usepackage[nameinlink,noabbrev]{cleveref}
\usepackage{xcolor}
\usepackage{microtype}
\usepackage{picture} 
\usepackage{afterpage} 

\usepackage{tikz} 
\usepackage{geometry}
\hypersetup{
	colorlinks=true,
	linkcolor=blue!60!black,
	citecolor=blue!60!black,
	urlcolor=blue!60!black,
	pdftitle={Value Coalition Logic: A Typed Assignment-Based Reconstruction of Coalition Logic},
	pdfauthor={Shanxia Wang}
}

\newtheorem{theorem}{Theorem}[section]
\newtheorem{proposition}[theorem]{Proposition}
\newtheorem{lemma}[theorem]{Lemma}
\newtheorem{corollary}[theorem]{Corollary}

\theoremstyle{definition}
\newtheorem{definition}[theorem]{Definition}
\newtheorem{example}[theorem]{Example}
\newtheorem{remark}[theorem]{Remark}
\newtheorem{convention}[theorem]{Convention}

\newcommand{\tr}{\mathit{tr}}

\newcommand{\VCL}{\mathsf{VCL}}
\newcommand{\CL}{\mathsf{CL}}
\newcommand{\Ag}{\mathsf{Ag}}
\newcommand{\Lang}{\mathcal{L}}
\newcommand{\Coh}{\mathsf{Coh}}

\title{Value Coalition Logic:\\
	A Typed Assignment-Based Reconstruction of Coalition Logic}

\author{Shanxia Wang\thanks{
		\small School of Computer and Information Engineering (School of Artificial Intelligence),\\
		Henan Normal University, Xinxiang, Henan, China\\
		\texttt{wangshanxia@htu.edu.cn}
}}

\date{}

\begin{document}

	\maketitle
	
	\begin{tikzpicture}[remember picture, overlay]
		\node[
		anchor=north west, 
		inner sep=0pt, 
		text width=12cm, 
		font=\footnotesize, 
		black
		] 
		at ([xshift=1.2cm, yshift=-1.8cm]current page.north west) 
		{
			This manuscript has been submitted to the Journal of Logic and Computation (submission ID: JLC 26-088), currently under peer review.
		};
	\end{tikzpicture}

	\begin{abstract}
		We introduce Value Coalition Logic, a typed assignment-based reconstruction of classical coalition logic. The strategic semantics is unchanged: coalitional ability is still interpreted by the standard one-step game-form clause. The change is at the atomic level. Instead of flat propositional valuations, states carry total assignments of values to finitely typed variables. As a result, exhaustivity and mutual exclusion of alternative values are built into the semantics, rather than imposed as external coherence constraints.
		
		We prove that, over each fixed finite typed signature, Value Coalition Logic is truth-equivalent to propositional coalition logic over coherent valuations. This correspondence yields a sound and complete Hilbert-style axiomatisation obtained by adding finite-domain value-coherence axioms to the standard axioms of coalition logic.
		
		The main contribution is structural. Projecting ordinary coalitional ability onto a single value domain yields quotient game forms, projected effectivity families, and strategic value-range hypergraphs. These structures support set-valued strategic exclusion, transversal polarity for disjoint coalitions, exact boundary duality between the empty and grand coalitions, and a measure of residual value indeterminacy. Thus the logic is conservative in its strategic modality, but exposes value-level invariants that are hidden in flat propositional encodings.
	\end{abstract}

	\section{Introduction}
	\label{sec:introduction}
	
	\subsection{Motivation}
	\label{subsec:intro-motivation}
	
	Coalition Logic ($\CL$) is a basic modal framework for reasoning about coalitional power in one-step games~\cite{Pauly02,Pauly02Comp}. In its standard reading, a formula $[C]\varphi$ says that coalition $C$ has a joint action that guarantees $\varphi$, independently of how the agents outside $C$ act. This simple one-step notion of ability has become a semantic core for richer strategic logics, including Alternating-Time Temporal Logic and Strategy Logic~\cite{ATL02,CHP10,MMPV14}.
	
	The usual language of $\CL$ is propositional. States are evaluated against an unstructured set of atomic propositions, and coalitional goals are arbitrary Boolean combinations of such atoms. This is fully adequate at the level of expressive power, but it does not make explicit a common feature of concrete multi-agent specifications: many objectives concern the value of a structured state component. A resource may be assigned to one of several agents, a protocol may be in one of finitely many modes, or a role may be allocated to a particular participant. Such claims are naturally written as value assertions
	\[
	x=c,
	\]
	where $x$ is a state variable and $c$ is an admissible value in its finite domain.
	
	For example, if $x$ denotes the current holder of a resource and
	\[
	D_x=\{1,2,3\},
	\]
	then the assertion $x=2$ says that agent $2$ currently holds the resource. A coalitional claim such as
	\[
	[C](x\in\{1,2\})
	\]
	then says that coalition $C$ can ensure that, after one round, the resource is held by either agent $1$ or agent $2$. This is a set-valued guarantee about a finite state component, not merely a pointwise guarantee of a single propositional atom.
	
	Finite-domain value assertions can of course be encoded propositionally. One may introduce atoms $p_x^c$ for all $c\in D_x$ and read $p_x^c$ as saying that $x$ has value $c$. However, in an unrestricted propositional model the atoms
	\[
	p_x^{c_1},\ldots,p_x^{c_k}
	\]
	are independent. Their intended exactly-one behaviour must be imposed by additional constraints:
	\[
	p_x^{c_1}\lor\cdots\lor p_x^{c_k},
	\qquad
	p_x^{c_i}\to\neg p_x^{c_j}
	\quad(i\neq j).
	\]
	Thus the variable--value organisation of the state description is not primitive in the flat propositional semantics. It can be recovered only after one externally specifies which atoms form the coherent value partition for each variable.
	
	This paper studies a typed assignment-based reconstruction of coalition logic in which this finite value structure is part of the semantic signature. The strategic clause for $[C]$ is left unchanged. What changes is the state-description layer: states are described by total typed assignments rather than by arbitrary propositional valuations. The resulting system, called \emph{Value Coalition Logic} ($\VCL$), is therefore strategically conservative but descriptively structured. Its purpose is not to add new strategic operators or new expressive power over coherent propositional encodings. Its purpose is to make finite value partitions explicit and to analyse the value-level effectivity structures induced by ordinary coalitional ability.
	
	This perspective is close in spirit to component-oriented approaches such as Boolean games and logics of propositional control~\cite{HHMW01,BLLZ06,DHKW08,CLPC05,Gerbrandy06,BH16}. The modelling choice is different, however. $\VCL$ does not assign direct control over propositional letters or variables. It keeps the ordinary coalition modality and replaces only the atomic semantic layer by typed value assertions.
	
	\subsection{Value Coalition Logic}
	\label{subsec:intro-vcl}
	
	A finite typed signature consists of a finite non-empty set of agents $\Ag$, a finite set of state variables $X$, and a finite non-empty domain $D_x$ for each $x\in X$. The atomic formulas of $\VCL$ are sorted value atoms
	\[
	(x{=}c),
	\qquad x\in X,\ c\in D_x.
	\]
	Formulas are obtained from these atoms by Boolean connectives and coalition modalities $[C]$, where $C\subseteq\Ag$.
	
	A $\VCL$ model has the same strategic component as a standard explicit one-step coalition model: a state space, action sets for agents, and a deterministic outcome function. Instead of a propositional valuation, it has a typed assignment map
	\[
	\pi:S\to\prod_{x\in X}D_x.
	\]
	The atomic clause is
	\[
	\mathcal{M},s\models (x{=}c)
	\quad\text{iff}\quad
	\pi(s)(x)=c,
	\]
	while the modal clause for $[C]$ is the usual one-step coalition clause. Hence each state assigns exactly one admissible value to each variable. Exhaustivity and exclusivity of alternative values are semantic consequences of typed assignments, not external propositional side conditions.
	
	For a fixed finite signature, $\VCL$ has a precise correspondence with propositional $\CL$. The translation
	\[
	(x{=}c)\mapsto p_x^c
	\]
	extends homomorphically to all formulas. It yields a truth-preserving correspondence between $\VCL$ models and propositional coalition models whose valuations satisfy the exactly-one constraints for every variable. Conversely, every such coherent propositional model uniquely determines an assignment-based $\VCL$ model over the same strategic frame. This correspondence gives a sound and complete Hilbert-style axiomatisation of $\VCL$ by adding finite-domain value-coherence axioms to the standard axioms of coalition logic.
	
	The correspondence also clarifies the scope of the paper. We do not claim that $\VCL$ has greater expressive power than propositional coalition logic over coherent encodings. Rather, the point is representational and structural: the finite value partitions are built into the semantic signature, and this makes it possible to study the value-level effectivity structures generated by ordinary coalitional ability.
	
	For $A\subseteq D_x$, write
	\[
	(x\in A)
	\quad\text{for}\quad
	\bigvee_{c\in A}(x{=}c).
	\]
	At a model $\mathcal{M}$, state $s$, variable $x$, and coalition $C$, define the projected value-effectivity family
	\[
	\mathcal{N}_x^{\mathcal{M}}(s,C)
	=
	\{\,A\subseteq D_x
	\mid
	\mathcal{M},s\models [C](x\in A)
	\,\}.
	\]
	This family records the value regions to which coalition $C$ can restrict the next value of $x$. It is not a new primitive effectivity semantics. It is obtained by applying the ordinary one-step coalition modality to value-region formulas.
	
	The key construction is the value quotient, or value-projection, of the underlying one-step game form. For each state $s$ and variable $x$, define
	\[
	\mathcal{G}_x^s
	=
	(D_x,\{\Sigma_i\}_{i\in\Ag},o_x^s),
	\qquad
	o_x^s(\gamma)=\pi(o(s,\gamma))(x).
	\]
	Equivalently, this factors the state-valued outcome map through the equivalence relation on successor states
	\[
	t\equiv_x u
	\quad\text{iff}\quad
	\pi(t)(x)=\pi(u)(x).
	\]
	Thus $\mathcal{G}_x^s$ keeps the original agents and actions but identifies successor states that agree on the next value of $x$.
	
	The projected family $\mathcal{N}_x^{\mathcal{M}}(s,C)$ is exactly the ordinary effectivity family of $C$ in this value quotient game form~\cite{Pauly02,GJT13}. Equivalently, each $C$-action determines a possible value range for $x$, and the enforceable value regions are precisely the supersets of such ranges. This yields an operational hypergraph representation of value control.
	
	\subsection{Contributions and Organisation}
	\label{subsec:intro-contributions}
	
	The paper makes five contributions.
	
	\begin{enumerate}[label=\textup{(\arabic*)},leftmargin=*]
		\item It defines $\VCL$ over finite typed signatures. States carry total assignments of values to variables, while the coalition modality keeps the standard one-step semantics of classical coalition logic.
		
		\item It proves a propositional correspondence theorem. Under the translation $(x{=}c)\mapsto p_x^c$, $\VCL$ is truth-equivalent to propositional coalition logic over coherent valuations. This yields a Hilbert-style axiomatisation by adding value-coherence axioms to the usual coalition-logic principles.
		
		\item It introduces value quotient, or value-projection, game forms. For each state $s$ and variable $x$, the original state-valued game form induces a game form over $D_x$, and projected value-effectivity is exactly ordinary game-form effectivity in this quotient.
		
		\item It gives an operational hypergraph representation. The enforceable value regions for a coalition are the upward closure of the value ranges generated by its concrete actions. Minimal enforceable regions are therefore precisely minimal strategic value ranges.
		
		\item It derives structural consequences of value projection: set-valued strategic exclusion, transversal polarity between disjoint coalitions, exact boundary duality between $\varnothing$ and $\Ag$, residual value indeterminacy, and a de re/de dicto contrast for value control.
	\end{enumerate}
	
	These results are deliberately modest in strategic ambition. $\VCL$ does not add temporal operators, quantitative objectives, imperfect information, explicit strategy quantification, or first-order quantifiers. The contribution is to make finite value assignments explicit inside one-step coalition logic and to analyse the value-level structures induced by that typed presentation.
	
	The paper proceeds as follows. Section~\ref{sec:related-work} discusses related work. Section~\ref{sec:vcl} defines the syntax and semantics of $\VCL$. Section~\ref{sec:correspondence-axiomatisation} proves the propositional correspondence and axiomatisation results. Section~\ref{sec:value-quotients} develops value quotients and the hypergraph representation. Section~\ref{sec:structural-results} derives the structural consequences of projected value effectivity. Section~\ref{sec:conclusion} concludes.

	\section{Related Work}
	\label{sec:related-work}
	
	The immediate technical background of this paper is classical Coalition Logic~\cite{Pauly02,Pauly02Comp}. In Pauly's one-step semantics, a formula $[C]\varphi$ expresses that coalition $C$ has a joint action guaranteeing $\varphi$ against all actions of the complementary coalition. This semantics is often formulated in terms of effectivity functions, while explicit game forms provide a concrete strategic representation of such effectivity. We use explicit game forms because their outcome maps can be projected directly to finite value domains. The connection between game forms and effectivity functions, including representation results for playable and truly playable effectivity functions, is well understood~\cite{GJT13}. The present paper remains within this one-step setting. It does not modify the effectivity interpretation of coalitional ability; it modifies the atomic state-description layer on which such ability is evaluated.
	
	A second line of related work concerns structured state descriptions and propositional control. Boolean games and cooperative Boolean games treat propositional variables as strategic components over which agents may have goals, preferences, or control~\cite{HHMW01,BLLZ06,DHKW08}. Logics of propositional control make agency over propositional components explicit either in the object language or in the semantics~\cite{CLPC05,Gerbrandy06,BH16}. $\VCL$ is related to this literature in that it also rejects a completely anonymous view of atomic state descriptions. Its modelling choice, however, is deliberately narrower. The primitive atoms of $\VCL$ are finite-domain value assertions $(x{=}c)$, and the logic does not assign direct control over atoms, propositions, or variables. Coalitional ability remains the ordinary one-step ability of classical coalition logic. Control over values is not postulated; it is induced by the underlying game form and then analysed through value projections.
	
	Richer strategic logics extend coalition logic in temporal, strategic, epistemic, or informational directions. Alternating-Time Temporal Logic adds temporal operators to coalitional ability~\cite{ATL02}, while Strategy Logic makes strategies explicit objects of quantification~\cite{CHP10,MMPV14,MMPV17}. Further variants investigate irrevocable strategies, imperfect information, memory, verification, approximation, and tractable fragments of strategic reasoning~\cite{AGJ07,BJ14,BMMRV17,CLMM18,BLLMY22,KJ24,BFJMM23,BJMM19}. These systems enrich the strategic language, the temporal structure, or the informational assumptions of the model. By contrast, $\VCL$ keeps the one-step coalition modality fixed. Its question is orthogonal: given the ordinary one-step notion of coalitional ability, what structure is obtained when the relevant state components are represented as finite typed variables rather than as an undifferentiated stock of propositional atoms?
	
	There are also related approaches to quantitative, multi-valued, commitment-based, and resource-sensitive strategic reasoning. Multi-valued and quantitative frameworks introduce richer evaluative dimensions into strategic ability~\cite{JKKP20,BG22}. Logics of joint abilities under strategy commitments refine the interaction between what agents can achieve and what they have committed to doing~\cite{LXL20}. Resource-bounded coalition logics and resource-bounded ATL constrain coalitional ability by resources that agents can spend or consume~\cite{Alechina09RBCL,Alechina10RBATL}. These approaches enrich the strategic, evaluative, or resource-sensitive dimension of agency. $\VCL$ instead enriches neither the strategic operator nor the resource model. It makes finite-domain state components primitive and builds their exactly-one behaviour into the semantics.
	
	The closest propositional analogue of $\VCL$ is classical coalition logic over atoms $p_x^c$ together with exactly-one constraints for each finite domain $D_x$. The correspondence developed below shows that, over a fixed finite typed signature, $\VCL$ is truth-equivalent to this coherent propositional encoding. Hence the contribution is not additional expressive power over such encodings. Rather, the contribution is representational and structural: the value partitions are part of the semantic signature, and ordinary coalitional ability can then be projected onto finite value domains. This yields value quotient game forms, projected value-effectivity families, strategic value ranges, transversal constraints, and residual value indeterminacy as explicit objects of analysis.

	\section{Syntax and Semantics of Value Coalition Logic}
	\label{sec:vcl}
	
	This section defines Value Coalition Logic ($\VCL$). The logic keeps the standard one-step coalition modality of classical Coalition Logic, but replaces the flat propositional valuation by a typed assignment of finite values to state variables. Thus the strategic part of the semantics is unchanged; the change is entirely in the atomic state-description layer.
	
	\subsection{One-Step Coalition Frames}
	\label{subsec:vcl-classical}
	
	Let $\Ag$ be a finite non-empty set of agents. For $C\subseteq\Ag$, write
	\[
	\overline C=\Ag\setminus C
	\]
	for the complementary coalition.
	
	A one-step strategic frame consists of a non-empty state space, non-empty action sets for the agents, and a deterministic outcome function. Its strategic component is a tuple
	\[
	(S,\{\Sigma_i\}_{i\in\Ag},o),
	\]
	where $S\neq\varnothing$, each $\Sigma_i\neq\varnothing$, and
	\[
	o:S\times\prod_{i\in\Ag}\Sigma_i\to S
	\]
	maps a current state and a complete action profile to a successor state.
	
	For $C\subseteq\Ag$, put
	\[
	\Sigma_C\coloneqq\prod_{i\in C}\Sigma_i .
	\]
	The empty product $\Sigma_{\varnothing}$ is understood as a singleton, whose unique element is denoted by $\langle\rangle$. If $C,D\subseteq\Ag$ are disjoint and
	\[
	\alpha_C\in\Sigma_C,
	\qquad
	\beta_D\in\Sigma_D,
	\]
	then
	\[
	\alpha_C\sqcup\beta_D\in\Sigma_{C\cup D}
	\]
	denotes the unique joint action extending both partial profiles.
	
	In the usual explicit game-form semantics for Coalition Logic~\cite{Pauly02,Pauly02Comp}, a propositional valuation is added to this strategic component. The modal clause is:
	\[
	\mathcal{M},s\models [C]\varphi
	\]
	iff there exists
	\[
	\alpha_C\in\Sigma_C
	\]
	such that, for every
	\[
	\beta_{\overline C}\in\Sigma_{\overline C},
	\]
	the successor
	\[
	o(s,\alpha_C\sqcup\beta_{\overline C})
	\]
	satisfies $\varphi$. Thus $[C]\varphi$ says that coalition $C$ can guarantee $\varphi$ in one step, independently of the simultaneous choices of the agents outside $C$. This is exactly the strategic clause retained by $\VCL$.
	
	\subsection{Typed Signatures and Value Atoms}
	\label{subsec:vcl-signatures}
	
	\begin{definition}[Typed signature]
		\label{def:typed-signature}
		A \emph{typed signature} is a tuple
		\[
		\Theta=(\Ag,X,\{D_x\}_{x\in X}),
		\]
		where $\Ag$ is a finite non-empty set of agents, $X$ is a finite set of state variables, and each $D_x$ is a finite non-empty domain of admissible values for $x$.
	\end{definition}
	
	Elements of
	\[
	\prod_{x\in X}D_x
	\]
	are called \emph{typed assignments}. If
	\[
	\eta\in\prod_{x\in X}D_x
	\]
	and $x\in X$, then $\eta(x)\in D_x$ is the value assigned to $x$ by $\eta$. When $X=\varnothing$, the empty product is understood as the singleton containing the empty assignment. This degenerate case is harmless; the value-based constructions below are non-trivial only when at least one variable is present.
	
	\begin{definition}[$\VCL$ formulas]
		\label{def:vcl-syntax}
		Let
		\[
		\Theta=(\Ag,X,\{D_x\}_{x\in X})
		\]
		be a typed signature. The language $\Lang_{\VCL}(\Theta)$ is generated by
		\[
		\varphi
		\;::=\;
		\top
		\mid
		(x{=}c)
		\mid
		\neg\varphi
		\mid
		(\varphi\land\psi)
		\mid
		[C]\varphi,
		\]
		where $x\in X$, $c\in D_x$, and $C\subseteq\Ag$.
	\end{definition}
	
	The connectives $\lor$, $\to$, $\leftrightarrow$ and the constant $\bot$ are defined as usual. We write $\Lang_{\VCL}$ when the signature is clear.
	
	\begin{convention}[Sorted value atoms]
		\label{conv:sorted-value-atoms}
		For each $x\in X$ and $c\in D_x$, the expression $(x{=}c)$ is a primitive atom of $\Lang_{\VCL}(\Theta)$. If $c\notin D_x$, then $(x{=}c)$ is not a well-formed formula.
		
		The equality symbol in $(x{=}c)$ is therefore syntactic notation for a sorted value assertion. $\VCL$ has no independent identity predicate, no term grammar, and no quantification over values. If the same symbol $c$ occurs in two domains $D_x$ and $D_y$, then $(x{=}c)$ and $(y{=}c)$ are distinct sorted atoms. Equivalently, the atomic vocabulary may be viewed as a family of indexed atoms $p_x^c$ with intended reading $x=c$.
	\end{convention}
	
	\subsection{$\VCL$ Models}
	\label{subsec:vcl-models}
	
	\begin{definition}[$\VCL$ model]
		\label{def:vcl-model}
		Let
		\[
		\Theta=(\Ag,X,\{D_x\}_{x\in X})
		\]
		be a typed signature. A \emph{$\VCL$ model} over $\Theta$ is a tuple
		\[
		\mathcal{M}=(S,\{\Sigma_i\}_{i\in\Ag},o,\pi),
		\]
		where:
		\begin{itemize}[leftmargin=*]
			\item $S$ is a non-empty set of states;
			\item each $\Sigma_i$ is a non-empty set of actions available to agent $i$;
			\item
			\[
			o:S\times\Sigma_{\Ag}\to S
			\]
			is a deterministic one-step outcome function;
			\item
			\[
			\pi:S\to\prod_{x\in X}D_x
			\]
			is a total typed assignment map.
		\end{itemize}
	\end{definition}
	
	Thus a $\VCL$ model has the same strategic component as an explicit one-step coalition model. The only difference is that the propositional valuation is replaced by the assignment map $\pi$.
	
	The map $\pi$ is not required to be injective or surjective. Distinct states may carry the same typed assignment while differing in their strategic transition behaviour, and some assignments in $\prod_{x\in X}D_x$ need not be realised by any state. This mirrors ordinary propositional models, where distinct states may agree on all atomic facts but differ strategically.
	
	\subsection{Truth Definition}
	\label{subsec:vcl-truth}
	
	\begin{definition}[Truth definition]
		\label{def:vcl-semantics}
		Let
		\[
		\mathcal{M}=(S,\{\Sigma_i\}_{i\in\Ag},o,\pi)
		\]
		be a $\VCL$ model over $\Theta$. The satisfaction relation $\mathcal{M},s\models\varphi$ is defined by induction:
		\begin{align*}
			\mathcal{M},s\models\top
			&\quad\text{always},\\
			\mathcal{M},s\models(x{=}c)
			&\quad\text{iff}\quad \pi(s)(x)=c,\\
			\mathcal{M},s\models\neg\varphi
			&\quad\text{iff}\quad \mathcal{M},s\not\models\varphi,\\
			\mathcal{M},s\models\varphi\land\psi
			&\quad\text{iff}\quad
			\mathcal{M},s\models\varphi
			\text{ and }
			\mathcal{M},s\models\psi,\\
			\mathcal{M},s\models[C]\varphi
			&\quad\text{iff}\quad
			\exists\,\alpha_C\in\Sigma_C
			\text{ such that }
			\forall\,\beta_{\overline C}\in\Sigma_{\overline C},\\
			&\hspace{3.1cm}
			\mathcal{M},o(s,\alpha_C\sqcup\beta_{\overline C})\models\varphi .
		\end{align*}
	\end{definition}
	
	The modal clause is exactly the ordinary one-step coalition clause. Coalition $C$ chooses a joint action for its members, and $\varphi$ must hold at every immediate successor compatible with arbitrary simultaneous actions of the complementary coalition. The distinctive feature of $\VCL$ lies only in the atomic clause: the truth of $(x{=}c)$ is determined by the value assigned to $x$ at the current state.
	
	\subsection{Assignment Coherence}
	\label{subsec:vcl-coherence}
	
	Because $\pi$ assigns a total value profile to every state, the alternative values of each variable are jointly exhaustive and pairwise exclusive. This is the basic semantic fact that later corresponds to propositional exactly-one constraints.
	
	\begin{theorem}[Assignment coherence]
		\label{thm:assignment-coherence}
		Let
		\[
		\mathcal{M}=(S,\{\Sigma_i\}_{i\in\Ag},o,\pi)
		\]
		be a $\VCL$ model over $\Theta$. For every state $s\in S$ and every variable $x\in X$:
		\begin{enumerate}[label=\textup{(\roman*)},leftmargin=*]
			\item \textup{(\emph{Exhaustivity})}
			\[
			\mathcal{M},s\models \bigvee_{c\in D_x}(x{=}c).
			\]
			
			\item \textup{(\emph{Exclusivity})}
			For all distinct $c,d\in D_x$,
			\[
			\mathcal{M},s\models (x{=}c)\to\neg(x{=}d).
			\]
		\end{enumerate}
		Consequently,
		\[
		\mathcal{M},s\models
		\Bigl(\bigvee_{c\in D_x}(x{=}c)\Bigr)
		\land
		\bigwedge_{\substack{c,d\in D_x\\ c\neq d}}
		\neg\bigl((x{=}c)\land(x{=}d)\bigr).
		\]
	\end{theorem}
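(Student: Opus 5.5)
The plan is to argue directly from the truth definition (Definition~\ref{def:vcl-semantics}), using only the fact that $\pi(s)$ is a total typed assignment, i.e.\ an element of $\prod_{x\in X}D_x$. No modal reasoning is needed: the coalition clause plays no role, since every formula involved is a Boolean combination of value atoms for the single variable $x$. Throughout I fix a model $\mathcal{M}$, a state $s$ and a variable $x\in X$, and write $c_0\coloneqq\pi(s)(x)$. Because $\pi(s)\in\prod_{x\in X}D_x$, we have $c_0\in D_x$, so $(x{=}c_0)$ is a well-formed atom in the sense of Convention~\ref{conv:sorted-value-atoms}.

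For exhaustivity, I will note that the atomic clause gives $\mathcal{M},s\models(x{=}c_0)$. Since $c_0\in D_x$, this atom is one of the disjuncts of $\bigvee_{c\in D_x}(x{=}c)$, which is therefore true at $s$. This uses the standard semantics of the defined $\lor$ (via $\neg$ and $\land$) for a finite disjunction. The disjunction is non-empty because $D_x\neq\varnothing$ by Definition~\ref{def:typed-signature}, so the empty-disjunction-equals-$\bot$ case never arises. This is the only point where non-emptiness of $D_x$ matters, and it is the one small subtlety I will flag.

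For exclusivity, take distinct $c,d\in D_x$ and suppose $\mathcal{M},s\models(x{=}c)$. The atomic clause gives $\pi(s)(x)=c$. Since $\pi(s)(x)$ is a single value (functionality of $\pi(s)$) and $c\neq d$, we get $\pi(s)(x)\neq d$, hence $\mathcal{M},s\not\models(x{=}d)$, i.e.\ $\mathcal{M},s\models\neg(x{=}d)$. By the defined semantics of $\to$, the implication $(x{=}c)\to\neg(x{=}d)$ holds at $s$.

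For the consequence, I will observe that $(x{=}c)\to\neg(x{=}d)$ is propositionally equivalent to $\neg((x{=}c)\land(x{=}d))$, so each conjunct of the final big conjunction holds at $s$. If $|D_x|=1$, the conjunction over pairs is empty and I read it as $\top$. Conjoining this with (i) gives the displayed formula. I do not expect any real obstacle; the only care needed is in the degenerate edge cases, namely non-empty domains for the disjunction and the empty conjunction over pairs.
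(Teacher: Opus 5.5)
Your proof is correct and follows essentially the same route as the paper: you fix the value $\pi(s)(x)\in D_x$ and use the atomic clause to get exhaustivity, you use single-valuedness of $\pi(s)$ to get exclusivity, and you obtain the final conjunction by classical propositional reasoning. Your explicit handling of the edge cases (non-emptiness of $D_x$ for the disjunction, and the empty conjunction when $|D_x|=1$) is a harmless refinement that the paper leaves implicit.
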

	
	\begin{proof}
		Fix $s\in S$ and $x\in X$. Since
		\[
		\pi(s)\in\prod_{y\in X}D_y,
		\]
		there is a unique $c^\ast\in D_x$ such that
		\[
		\pi(s)(x)=c^\ast.
		\]
		By the atomic truth clause,
		\[
		\mathcal{M},s\models(x{=}c^\ast),
		\]
		and hence
		\[
		\mathcal{M},s\models\bigvee_{c\in D_x}(x{=}c).
		\]
		
		For exclusivity, let $c,d\in D_x$ with $c\neq d$. If
		\[
		\mathcal{M},s\models(x{=}c),
		\]
		then $\pi(s)(x)=c$. Hence $\pi(s)(x)\neq d$, so
		\[
		\mathcal{M},s\not\models(x{=}d).
		\]
		Therefore
		\[
		\mathcal{M},s\models (x{=}c)\to\neg(x{=}d).
		\]
		The final displayed formula follows by classical propositional reasoning.
	\end{proof}
	
	\begin{remark}[Coherence and propositional encodings]
		\label{rem:coherence-significance}
		Theorem~\ref{thm:assignment-coherence} is not an additional strategic principle. It follows solely from interpreting states as total typed assignments. In a flat propositional presentation, the same behaviour can be simulated by atoms $p_x^c$ together with exactly-one constraints for every variable $x$. In $\VCL$, these constraints are built into the semantic interpretation of the atomic layer.
	\end{remark}

	\section{Propositional Correspondence and Axiomatisation}
	\label{sec:correspondence-axiomatisation}
	
	This section makes precise the sense in which $\VCL$ is a typed reconstruction of propositional Coalition Logic. Since all value domains are finite, every value atom $(x{=}c)$ can be represented by a propositional atom $p_x^c$. This representation is faithful exactly over those propositional models in which, for each variable $x$, the atoms $\{p_x^c\mid c\in D_x\}$ form an exactly-one value partition at every state.
	
	Throughout this section, fix a finite typed signature
	\[
	\Theta=(\Ag,X,\{D_x\}_{x\in X}).
	\]
	
	\subsection{Coherent Propositional Encodings}
	\label{subsec:encoding}
	
	For each $x\in X$ and $c\in D_x$, let $p_x^c$ be a distinct propositional atom, and put
	\[
	P_\Theta
	\coloneqq
	\{\,p_x^c \mid x\in X,\ c\in D_x\,\}.
	\]
	Let $\Lang_{\CL}(P_\Theta)$ be the ordinary propositional coalition language over $P_\Theta$.
	
	\begin{definition}[Propositional translation]
		\label{def:translation}
		The translation
		\[
		\tr:\Lang_{\VCL}(\Theta)\to\Lang_{\CL}(P_\Theta)
		\]
		is defined recursively by
		\begin{align*}
			\tr(\top) &\coloneqq \top,\\
			\tr(x{=}c) &\coloneqq p_x^c,\\
			\tr(\neg\varphi) &\coloneqq \neg\tr(\varphi),\\
			\tr(\varphi\land\psi) &\coloneqq \tr(\varphi)\land\tr(\psi),\\
			\tr([C]\varphi) &\coloneqq [C]\tr(\varphi).
		\end{align*}
	\end{definition}
	
	Thus $\tr$ is homomorphic for Boolean connectives and coalition modalities. Its only non-trivial action is to replace each sorted value atom by the corresponding propositional atom.
	
	\begin{definition}[Coherence constraints]
		\label{def:coherence}
		For each $x\in X$, define
		\[
		\Coh_x
		\coloneqq
		\Bigl(\bigvee_{c\in D_x}p_x^c\Bigr)
		\land
		\bigwedge_{\substack{c,d\in D_x\\ c\neq d}}
		(p_x^c\to\neg p_x^d).
		\]
		The global coherence formula is
		\[
		\Coh_\Theta
		\coloneqq
		\bigwedge_{x\in X}\Coh_x.
		\]
		A propositional coalition model over $P_\Theta$ is \emph{coherent} if every state satisfies $\Coh_\Theta$.
	\end{definition}
	
	When $X=\varnothing$, the empty conjunction defining $\Coh_\Theta$ is understood as $\top$. For $x\in X$, the formula $\Coh_x$ says that at least one value of $x$ holds and that no two distinct values of $x$ hold simultaneously.
	
	\begin{definition}[From assignment models to coherent propositional models]
		\label{def:associated-prop-model}
		Let
		\[
		\mathcal{M}=(S,\{\Sigma_i\}_{i\in\Ag},o,\pi)
		\]
		be a $\VCL$ model over $\Theta$. Its associated propositional coalition model is
		\[
		\mathcal{M}^\ast=(S,\{\Sigma_i\}_{i\in\Ag},o,V),
		\]
		where
		\[
		V(p_x^c)
		\coloneqq
		\{\,s\in S\mid \pi(s)(x)=c\,\}.
		\]
	\end{definition}
	
	\begin{definition}[From coherent propositional models to assignment models]
		\label{def:associated-vcl-model}
		Let
		\[
		\mathcal{N}=(S,\{\Sigma_i\}_{i\in\Ag},o,V)
		\]
		be a coherent propositional coalition model over $P_\Theta$. Define
		\[
		\pi^\dagger:S\to\prod_{x\in X}D_x
		\]
		by letting $\pi^\dagger(s)(x)$ be the unique $c\in D_x$ such that
		\[
		s\in V(p_x^c).
		\]
		The associated $\VCL$ model is
		\[
		\mathcal{N}^\dagger
		=
		(S,\{\Sigma_i\}_{i\in\Ag},o,\pi^\dagger).
		\]
	\end{definition}
	
	\begin{lemma}[Associated models]
		\label{lem:associated-models}
		For every $\VCL$ model $\mathcal{M}$ over $\Theta$, the associated propositional model $\mathcal{M}^\ast$ is coherent. Conversely, if $\mathcal{N}$ is a coherent propositional coalition model over $P_\Theta$, then $\mathcal{N}^\dagger$ is well-defined and unique.
	\end{lemma}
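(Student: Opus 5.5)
The plan splits into the two halves of Lemma~\ref{lem:associated-models}, each checked pointwise at an arbitrary state $s\in S$ and variable $x\in X$. The case $X=\varnothing$ is immediate: $\Coh_\Theta$ is $\top$, and the codomain of $\pi^\dagger$ is the singleton containing the empty assignment. So assume $x\in X$ is given.

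For the first half, I unfold Definition~\ref{def:associated-prop-model}. By construction, $\mathcal{M}^\ast,s\models p_x^c$ iff $s\in V(p_x^c)$ iff $\pi(s)(x)=c$. Since $\pi(s)(x)$ is a single element $c^\ast\in D_x$, we get $s\in V(p_x^{c^\ast})$, which verifies the disjunction in $\Coh_x$. If $c\neq d$ and $s\in V(p_x^c)$, then $\pi(s)(x)=c\neq d$, so $s\notin V(p_x^d)$. This verifies each conjunct $p_x^c\to\neg p_x^d$. The argument is the same one as in Theorem~\ref{thm:assignment-coherence}, transported to the valuation $V$. It involves only atoms, so the modal clause plays no role. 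Hence every state satisfies $\Coh_x$ for all $x$, and therefore satisfies $\Coh_\Theta$. This means $\mathcal{M}^\ast$ is coherent.

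For the second half, coherence of $\mathcal{N}$ gives $\mathcal{N},s\models\Coh_x$, and I use its two parts separately. The disjunctive part yields at least one $c\in D_x$ with $s\in V(p_x^c)$. The exclusivity part yields at most one: if $s\in V(p_x^c)\cap V(p_x^d)$ with $c\neq d$, this contradicts $p_x^c\to\neg p_x^d$. So the description ``the unique $c\in D_x$ with $s\in V(p_x^c)$'' denotes exactly one element of $D_x$. The resulting map $\pi^\dagger(s)$ sends each $x$ to an element of $D_x$, so $\pi^\dagger(s)\in\prod_{x\in X}D_x$. Thus $\pi^\dagger$ is a total typed assignment map, and $\mathcal{N}^\dagger$ meets Definition~\ref{def:vcl-model}, since its strategic component is inherited unchanged from $\mathcal{N}$.

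For uniqueness, I would read the claim as follows: $\pi^\dagger$ is the only assignment map $\pi'$ on the same frame satisfying $\pi'(s)(x)=c\iff s\in V(p_x^c)$, equivalently $(\mathcal{N}^\dagger)^\ast=\mathcal{N}$. Any such $\pi'$ assigns to each $s$ and $x$ a value $c$ with $s\in V(p_x^c)$, and by the ``at most one'' step that value is forced to equal $\pi^\dagger(s)(x)$. The only real obstacle is this interpretive one, namely fixing what ``unique'' means. Once it is fixed as above, every step is a routine unfolding of the definitions.
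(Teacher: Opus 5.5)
Your proposal is correct and follows the paper's proof essentially step for step: you unfold $V(p_x^c)=\{s\mid\pi(s)(x)=c\}$ to show that $\mathcal{M}^\ast$ is coherent, and you use the exhaustive and exclusive parts of $\Coh_x$ to get exactly one value for $\pi^\dagger(s)(x)$. Your explicit reading of ``unique'', as the only assignment map $\pi'$ with $\pi'(s)(x)=c\iff s\in V(p_x^c)$, is slightly more substantive than the paper's, which only notes that the defining description determines each value; it is settled by the same at-most-one argument.
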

	
	\begin{proof}
		For $\mathcal{M}^\ast$, fix $s\in S$ and $x\in X$. Since $\pi(s)$ is a typed assignment, there is a unique $c^\ast\in D_x$ such that
		\[
		\pi(s)(x)=c^\ast.
		\]
		Hence
		\[
		s\in V(p_x^{c^\ast}).
		\]
		If $d\in D_x$ and $d\neq c^\ast$, then $\pi(s)(x)\neq d$, so
		\[
		s\notin V(p_x^d).
		\]
		Thus $s\models\Coh_x$. Since $s$ and $x$ were arbitrary, every state of $\mathcal{M}^\ast$ satisfies $\Coh_\Theta$.
		
		Conversely, suppose $\mathcal{N}$ is coherent. For every $s\in S$ and every $x\in X$, the exhaustive part of $\Coh_x$ gives at least one $c\in D_x$ with $s\in V(p_x^c)$, while the exclusive part gives at most one such $c$. Hence $\pi^\dagger(s)(x)$ is well-defined. Since this value is uniquely determined for every $s$ and $x$, the map $\pi^\dagger$ is unique.
	\end{proof}
	
	The two constructions are inverse to each other on the intended coherent class.
	
	\begin{proposition}[Near-inverse correspondence]
		\label{prop:near-inverse-correspondence}
		\leavevmode
		\begin{enumerate}[label=\textup{(\roman*)},leftmargin=*]
			\item For every $\VCL$ model $\mathcal{M}$ over $\Theta$,
			\[
			(\mathcal{M}^\ast)^\dagger=\mathcal{M}.
			\]
			\item For every coherent propositional coalition model $\mathcal{N}$ over $P_\Theta$,
			\[
			(\mathcal{N}^\dagger)^\ast=\mathcal{N}.
			\]
		\end{enumerate}
	\end{proposition}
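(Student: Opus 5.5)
The plan is to compare the two tuples componentwise. Both constructions leave the strategic component $(S,\{\Sigma_i\}_{i\in\Ag},o)$ untouched. Hence each part reduces to checking that the atomic layer is recovered: the assignment map in (i), the valuation in (ii). Since $\pi$ and $V$ are functions, it suffices to show pointwise equality of values.

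For (i), I first invoke Lemma~\ref{lem:associated-models}. It tells us that $\mathcal{M}^\ast$ is coherent, so $(\mathcal{M}^\ast)^\dagger$ is defined. Write $\pi'$ for its assignment map, and fix $s\in S$ and $x\in X$. By Definition~\ref{def:associated-vcl-model}, $\pi'(s)(x)$ is the unique $c\in D_x$ with $s\in V(p_x^c)$. By Definition~\ref{def:associated-prop-model}, $s\in V(p_x^c)$ holds iff $\pi(s)(x)=c$. So $c=\pi(s)(x)$ satisfies the defining condition, and by the uniqueness established in Lemma~\ref{lem:associated-models} we get $\pi'(s)(x)=\pi(s)(x)$. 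Hence $\pi'=\pi$ as elements of $\prod_{x}D_x$ for every $s$.

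For (ii), let $V'$ be the valuation of $(\mathcal{N}^\dagger)^\ast$, and fix $p_x^c\in P_\Theta$. Then
\[
s\in V'(p_x^c)\iff \pi^\dagger(s)(x)=c .
\]
By the definition of $\pi^\dagger$ we have $s\in V(p_x^{\pi^\dagger(s)(x)})$, so the forward direction gives $s\in V(p_x^c)$. Conversely, if $s\in V(p_x^c)$, the exclusivity part of $\Coh_x$ at $s$ forces $c$ to be the unique witness, so $\pi^\dagger(s)(x)=c$. Thus $V'(p_x^c)=V(p_x^c)$ for all atoms of $P_\Theta$. Since a propositional model over $P_\Theta$ assigns values only to these atoms, $V'=V$.

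There is no serious obstacle here. The only point needing care is (ii), where coherence is genuinely used: without the exactly-one condition, $\pi^\dagger$ would be undefined or would lose information. One bookkeeping remark is also needed: the case $X=\varnothing$ is covered uniformly, because then $P_\Theta=\varnothing$ and the assignment maps are constant at the empty assignment.
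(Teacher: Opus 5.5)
Your proposal is correct and follows essentially the same route as the paper. Both arguments compare the tuples componentwise, observe that the strategic component is unchanged, and recover $\pi$ in (i) and $V$ on $P_\Theta$ in (ii) by unfolding Definitions~\ref{def:associated-prop-model} and~\ref{def:associated-vcl-model}, with coherence supplying the uniqueness that (ii) needs; you just spell out the two inclusions in (ii) and the degenerate case $X=\varnothing$ more explicitly than the paper does.
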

	
	\begin{proof}
		Let
		\[
		\mathcal{M}=(S,\{\Sigma_i\}_{i\in\Ag},o,\pi).
		\]
		The model $(\mathcal{M}^\ast)^\dagger$ has the same state space, action sets, and outcome function as $\mathcal{M}$. Its assignment map sends $s$ and $x$ to the unique $c\in D_x$ such that
		\[
		s\in V(p_x^c).
		\]
		By the definition of $\mathcal{M}^\ast$, this is exactly the unique $c$ satisfying $\pi(s)(x)=c$. Hence the recovered assignment map is $\pi$.
		
		Conversely, let
		\[
		\mathcal{N}=(S,\{\Sigma_i\}_{i\in\Ag},o,V)
		\]
		be coherent. The model $(\mathcal{N}^\dagger)^\ast$ has the same strategic component as $\mathcal{N}$. For every atom $p_x^c\in P_\Theta$,
		\[
		V^\ast(p_x^c)
		=
		\{\,s\in S\mid \pi^\dagger(s)(x)=c\,\}.
		\]
		By the definition of $\pi^\dagger$ and coherence of $\mathcal{N}$, this set is exactly $V(p_x^c)$. Thus the recovered valuation agrees with $V$ on all atoms in $P_\Theta$.
	\end{proof}
	
	\subsection{Truth and Validity Correspondence}
	\label{subsec:truth-validity}
	
	\begin{theorem}[Truth correspondence]
		\label{thm:encoding-equivalence}
		Let $\varphi\in\Lang_{\VCL}(\Theta)$.
		\begin{enumerate}[label=\textup{(\roman*)},leftmargin=*]
			\item For every $\VCL$ model $\mathcal{M}$ over $\Theta$ and every state $s$,
			\[
			\mathcal{M},s\models\varphi
			\quad\text{iff}\quad
			\mathcal{M}^\ast,s\models\tr(\varphi).
			\]
			
			\item For every coherent propositional model $\mathcal{N}$ over $P_\Theta$ and every state $s$,
			\[
			\mathcal{N},s\models\tr(\varphi)
			\quad\text{iff}\quad
			\mathcal{N}^\dagger,s\models\varphi.
			\]
		\end{enumerate}
	\end{theorem}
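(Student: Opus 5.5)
Part~(i) will be proved by structural induction on $\varphi$, simultaneously for all states $s$ of a fixed $\VCL$ model $\mathcal{M}$. The essential point is that $\mathcal{M}$ and $\mathcal{M}^\ast$ share the same state space, action sets and outcome function (Definition~\ref{def:associated-prop-model}). Only the atomic layer differs, and $\tr$ acts non-trivially only there.

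The cases run as follows.
\begin{itemize}
\item For $\top$ both sides hold trivially.
\item For $(x{=}c)$ we have
\[
\mathcal{M},s\models(x{=}c)\iff \pi(s)(x)=c\iff s\in V(p_x^c)\iff \mathcal{M}^\ast,s\models p_x^c=\tr(x{=}c),
\]
directly from the atomic clause and the definition of $V$.
\item The cases $\neg\varphi$ and $\varphi\land\psi$ follow from the induction hypothesis, because the Boolean clauses are identical on both sides and $\tr$ commutes with them.
\item The modal case $[C]\varphi$ is the only one requiring attention. Unfold Definition~\ref{def:vcl-semantics}: $\mathcal{M},s\models[C]\varphi$ iff there is $\alpha_C\in\Sigma_C$ such that for all $\beta_{\overline C}\in\Sigma_{\overline C}$ we have $\mathcal{M},o(s,\alpha_C\sqcup\beta_{\overline C})\models\varphi$. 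Since the induction hypothesis holds at \emph{every} state, we may replace the inner condition by $\mathcal{M}^\ast,o(s,\alpha_C\sqcup\beta_{\overline C})\models\tr(\varphi)$. Because $\Sigma_C$, $\Sigma_{\overline C}$ and $o$ are literally the same in $\mathcal{M}^\ast$, this is exactly the standard coalition clause for $\mathcal{M}^\ast,s\models[C]\tr(\varphi)=\tr([C]\varphi)$.
\end{itemize}
The one point to state explicitly is this: the induction hypothesis must be formulated uniformly over all states of the model, not at a fixed $s$, so that it can be applied at the successors. With that formulation the modal step goes through; there is no genuine obstacle.

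Part~(ii) will not need a second induction; it reduces to part~(i). Let $\mathcal{N}$ be coherent. By Lemma~\ref{lem:associated-models}, $\mathcal{N}^\dagger$ is a well-defined $\VCL$ model, so part~(i) applied to it gives
\[
\mathcal{N}^\dagger,s\models\varphi\iff(\mathcal{N}^\dagger)^\ast,s\models\tr(\varphi).
\]
By Proposition~\ref{prop:near-inverse-correspondence}(ii), $(\mathcal{N}^\dagger)^\ast=\mathcal{N}$, which yields the claim. Strictly, that proposition gives agreement of the valuations on the atoms in $P_\Theta$. That suffices, since $\tr(\varphi)\in\Lang_{\CL}(P_\Theta)$ mentions only such atoms.
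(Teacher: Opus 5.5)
Your proof is correct. Part~(i) is the paper's argument: structural induction in which the modal step rests on $\mathcal{M}$ and $\mathcal{M}^\ast$ sharing their strategic component. Your remark that the induction hypothesis must range over all states, so that it applies at successors, is the right point to make explicit.

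Part~(ii) is where you take a different route. The paper reruns the same induction for the pair $\mathcal{N},\mathcal{N}^\dagger$ and handles the atomic step directly from the definition of $\pi^\dagger$. You instead derive (ii) as a corollary of (i): you apply (i) to the $\VCL$ model $\mathcal{N}^\dagger$, which is legitimate by Lemma~\ref{lem:associated-models}, and transport the result along the identity $(\mathcal{N}^\dagger)^\ast=\mathcal{N}$ of Proposition~\ref{prop:near-inverse-correspondence}(ii). This is not circular, because that proposition is proved from the definitions and coherence alone, with no appeal to truth. Your caveat about valuations agreeing only on $P_\Theta$ is also handled correctly.

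Your route avoids a second induction. It also isolates the single place where coherence is used, namely the round-trip identity on valuations. The paper's route keeps the theorem independent of the near-inverse proposition, at the cost of repeating the induction. In the paper, coherence enters at the atomic step: $s\in V(p_x^c)$ iff $\pi^\dagger(s)(x)=c$ requires the exactly-one property. The mathematical content is the same, and yours is slightly more economical given what the paper has already established.
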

	
	\begin{proof}
		Both claims are proved by structural induction on $\varphi$. The Boolean cases are immediate.
		
		For atoms, in the first direction,
		\[
		\mathcal{M},s\models(x{=}c)
		\quad\text{iff}\quad
		\pi(s)(x)=c
		\quad\text{iff}\quad
		s\in V(p_x^c)
		\quad\text{iff}\quad
		\mathcal{M}^\ast,s\models p_x^c.
		\]
		The atomic case for coherent propositional models is identical, using the definition of $\pi^\dagger$.
		
		For the modal case, let $\varphi=[C]\psi$. The paired models $\mathcal{M}$ and $\mathcal{M}^\ast$ have the same state space, the same action sets, and the same outcome function. Hence the witnessing $C$-actions and all compatible successor states are identical. Applying the induction hypothesis at each such successor gives
		\[
		\mathcal{M},s\models[C]\psi
		\quad\text{iff}\quad
		\mathcal{M}^\ast,s\models[C]\tr(\psi).
		\]
		The argument for $\mathcal{N}$ and $\mathcal{N}^\dagger$ is the same.
	\end{proof}
	
	\begin{corollary}[Validity correspondence]
		\label{cor:validity-correspondence}
		For every $\varphi\in\Lang_{\VCL}(\Theta)$,
		\[
		\models_{\VCL,\Theta}\varphi
		\quad\text{iff}\quad
		\models_{\mathsf{cohCL},\Theta}\tr(\varphi),
		\]
		where $\models_{\mathsf{cohCL},\Theta}$ denotes validity over coherent propositional coalition models over $P_\Theta$.
	\end{corollary}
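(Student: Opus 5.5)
The corollary follows from Theorem~\ref{thm:encoding-equivalence} together with the bijective correspondence of Proposition~\ref{prop:near-inverse-correspondence}. I will prove the two directions of the biconditional separately, each by contraposition. Validity in each class means truth at every state of every model of that class.

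For the left-to-right direction, suppose $\tr(\varphi)$ is not valid over coherent propositional models. Then there is a coherent model $\mathcal{N}$ over $P_\Theta$ and a state $s$ with $\mathcal{N},s\not\models\tr(\varphi)$. By Lemma~\ref{lem:associated-models}, $\mathcal{N}^\dagger$ is a well-defined $\VCL$ model over $\Theta$. By Theorem~\ref{thm:encoding-equivalence}(ii), $\mathcal{N}^\dagger,s\not\models\varphi$. Hence $\varphi$ is not $\VCL$-valid over $\Theta$.

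For the right-to-left direction, suppose $\mathcal{M},s\not\models\varphi$ for some $\VCL$ model $\mathcal{M}$ over $\Theta$ and some state $s$. By Lemma~\ref{lem:associated-models}, $\mathcal{M}^\ast$ is coherent. By Theorem~\ref{thm:encoding-equivalence}(i), $\mathcal{M}^\ast,s\not\models\tr(\varphi)$. Hence $\tr(\varphi)$ is not valid over coherent models. Both constructions keep the same state set, so the state $s$ transfers unchanged in each direction.

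I expect no real obstacle; each step is a direct appeal to an earlier result. The one point to check is whether a coherent propositional model may interpret atoms outside $P_\Theta$. It may not: such models are defined over $P_\Theta$, and $\tr(\varphi)$ mentions only atoms in $P_\Theta$. So the $\dagger$ construction covers every coherent countermodel, and Proposition~\ref{prop:near-inverse-correspondence} is not strictly needed.
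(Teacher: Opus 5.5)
Your proof is correct and matches the paper's. Both directions transfer along $\mathcal{N}\mapsto\mathcal{N}^\dagger$ and $\mathcal{M}\mapsto\mathcal{M}^\ast$ via Lemma~\ref{lem:associated-models} and Theorem~\ref{thm:encoding-equivalence}; the only difference is that you argue by contraposition with countermodels while the paper argues directly, and you are right that Proposition~\ref{prop:near-inverse-correspondence} is not needed, since the paper's proof does not use it either.
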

	
	\begin{proof}
		Suppose first that $\varphi$ is valid over all $\VCL$ models over $\Theta$. Let $\mathcal{N}$ be a coherent propositional coalition model over $P_\Theta$. By Lemma~\ref{lem:associated-models}, $\mathcal{N}^\dagger$ is a $\VCL$ model over $\Theta$. Hence
		\[
		\mathcal{N}^\dagger,s\models\varphi
		\]
		for every state $s$. By Theorem~\ref{thm:encoding-equivalence},
		\[
		\mathcal{N},s\models\tr(\varphi)
		\]
		for every $s$. Thus $\tr(\varphi)$ is valid over coherent propositional coalition models.
		
		Conversely, suppose $\tr(\varphi)$ is valid over coherent propositional coalition models. Let $\mathcal{M}$ be a $\VCL$ model over $\Theta$. By Lemma~\ref{lem:associated-models}, $\mathcal{M}^\ast$ is coherent. Hence
		\[
		\mathcal{M}^\ast,s\models\tr(\varphi)
		\]
		for every state $s$. By Theorem~\ref{thm:encoding-equivalence},
		\[
		\mathcal{M},s\models\varphi
		\]
		for every $s$. Therefore $\varphi$ is valid over $\VCL$ models.
	\end{proof}
	
	Thus, for a fixed finite typed signature, $\VCL$ is validity-equivalent to propositional Coalition Logic over coherent valuations. The equivalence is not merely a translation of formulas; it is supported by a near-inverse correspondence between assignment models and coherent propositional models over the same strategic component.
	
	\subsection{Hilbert System}
	\label{subsec:hilbert-system}
	
	We now give a Hilbert-style system for $\VCL$. It consists of the usual one-step principles of Coalition Logic together with value-coherence axioms.
	
	\begin{definition}[The proof system $\VCL_\Theta$]
		\label{def:vcl-proof-system}
		The system $\VCL_\Theta$ has the following axiom schemata and rules.
		
		\paragraph{Propositional base.}
		\begin{enumerate}[label=\textup{(P\arabic*)},leftmargin=*]
			\item All substitution instances of classical propositional tautologies.
		\end{enumerate}
		
		\paragraph{Coalitional core.}
		For all $C,D\subseteq\Ag$:
		\begin{enumerate}[label=\textup{(C\arabic*)},leftmargin=*]
			\item $[C](\varphi\land\psi)\to[C]\varphi$.
			\item $\neg[C]\bot$.
			\item $[C]\top$.
			\item $[C]\varphi\land[D]\psi\to[C\cup D](\varphi\land\psi)$, whenever $C\cap D=\varnothing$.
			\item $\neg[\varnothing]\neg\varphi\to[\Ag]\varphi$.
		\end{enumerate}
		
		\paragraph{Value coherence.}
		\begin{enumerate}[label=\textup{(V\arabic*)},leftmargin=*]
			\item $\displaystyle\bigvee_{c\in D_x}(x{=}c)$ for each $x\in X$.
			\item $(x{=}c)\to\neg(x{=}d)$ for each $x\in X$ and all distinct $c,d\in D_x$.
		\end{enumerate}
		
		\paragraph{Rules.}
		\begin{enumerate}[label=\textup{(R\arabic*)},leftmargin=*]
			\item \textup{(MP)} From $\varphi$ and $\varphi\to\psi$, infer $\psi$.
			\item \textup{(RE)} From $\vdash\varphi\leftrightarrow\psi$, infer
			\[
			\vdash [C]\varphi\leftrightarrow[C]\psi
			\]
			for every $C\subseteq\Ag$.
		\end{enumerate}
	\end{definition}
	
	Axioms \textup{(C1)}--\textup{(C5)} are the standard one-step coalition principles~\cite{Pauly02,Pauly02Comp}. Axioms \textup{(V1)} and \textup{(V2)} are the proof-theoretic counterparts of assignment coherence.
	
	For comparison, let $\mathsf{CL}_{\Coh,\Theta}$ be the propositional companion system over $P_\Theta$ obtained by adding to the usual Hilbert system for classical Coalition Logic the coherence schemata
	\[
	\bigvee_{c\in D_x}p_x^c
	\qquad
	(x\in X),
	\]
	and
	\[
	p_x^c\to\neg p_x^d
	\qquad
	(x\in X,\ c,d\in D_x,\ c\neq d).
	\]
	
	\begin{lemma}[Syntactic correspondence]
		\label{lem:syntactic-correspondence}
		For every $\varphi\in\Lang_{\VCL}(\Theta)$,
		\[
		\vdash_{\VCL_\Theta}\varphi
		\quad\text{iff}\quad
		\vdash_{\mathsf{CL}_{\Coh,\Theta}}\tr(\varphi).
		\]
	\end{lemma}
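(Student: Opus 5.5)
The plan is to prove both directions by induction on the length of derivations, using the fact that $\tr$ is a bijection between $\Lang_{\VCL}(\Theta)$ and $\Lang_{\CL}(P_\Theta)$. Its inverse $\tr^{-1}$ sends $p_x^c$ to $(x{=}c)$ and commutes with all connectives and modalities. Injectivity and surjectivity are immediate from Definition~\ref{def:translation}: both languages are freely generated by the same constructors, and the map on atoms $(x{=}c)\mapsto p_x^c$ is a bijection onto $P_\Theta$ (Convention~\ref{conv:sorted-value-atoms} guarantees that distinct sorted atoms go to distinct propositional atoms). Since $\tr$ is a homomorphism, it preserves the shape of formulas exactly. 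This makes the two proof systems mirror images of one another.

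\emph{Left-to-right.} Let $\varphi_1,\dots,\varphi_n=\varphi$ be a $\VCL_\Theta$-derivation. I will show by induction on $k$ that $\vdash_{\mathsf{CL}_{\Coh,\Theta}}\tr(\varphi_k)$. The axiom cases go as follows.
\begin{itemize}
\item If $\varphi_k$ is an instance of (P1), say $\sigma(\theta)$ with $\theta$ a tautology and $\sigma$ a substitution, then $\tr(\varphi_k)=(\tr\circ\sigma)(\theta)$ is again a substitution instance of the same tautology.
\item If $\varphi_k$ is an instance of (C1)--(C5), with $\varphi,\psi$ instantiated, then $\tr$ of it is the same schema instantiated with $\tr(\varphi),\tr(\psi)$. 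The side condition $C\cap D=\varnothing$ concerns only coalitions, which are untouched.
\item Instances of (V1) and (V2) translate literally to the coherence schemata $\bigvee_{c\in D_x}p_x^c$ and $p_x^c\to\neg p_x^d$ that define $\mathsf{CL}_{\Coh,\Theta}$.
\end{itemize}
For the rules: (MP) is preserved because $\tr(\varphi\to\psi)=\tr(\varphi)\to\tr(\psi)$, with the defined connectives translated homomorphically since $\tr$ commutes with $\neg,\land$. (RE) is preserved because $\tr([C]\varphi\leftrightarrow[C]\psi)=[C]\tr(\varphi)\leftrightarrow[C]\tr(\psi)$, and the premise $\tr(\varphi)\leftrightarrow\tr(\psi)$ is derivable by the induction hypothesis.

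\emph{Right-to-left.} This is symmetric, using $\tr^{-1}$. Every axiom instance of $\mathsf{CL}_{\Coh,\Theta}$ over $P_\Theta$ is mapped by $\tr^{-1}$ to an axiom instance of $\VCL_\Theta$, and the rules are mapped to rules. So from a $\mathsf{CL}_{\Coh,\Theta}$-derivation of $\tr(\varphi)$ we obtain a $\VCL_\Theta$-derivation of $\tr^{-1}(\tr(\varphi))=\varphi$.

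The only point requiring care, and the expected main obstacle, is making sure the ``usual Hilbert system for classical Coalition Logic'' underlying $\mathsf{CL}_{\Coh,\Theta}$ is taken to be exactly (P1), (C1)--(C5), (MP) and (RE) over $P_\Theta$. Otherwise one must separately show that each system derives the other's rules. I will fix this convention explicitly at the start. If a variant presentation is intended, for instance monotonicity in place of (C1)$+$(RE), I will first show the two presentations are interderivable in the propositional setting. That argument transfers verbatim via $\tr$. A second minor point is the treatment of defined symbols ($\lor,\to,\bot$) inside schemata. I will note that they abbreviate the same $\neg,\land,\top$ combinations on both sides, so $\tr$ commutes with them.
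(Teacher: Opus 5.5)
Your proposal is correct and follows essentially the same route as the paper. For the forward direction you translate a $\VCL_\Theta$-derivation line by line under $\tr$; for the converse you apply the inverse homomorphism (your $\tr^{-1}$, the paper's back-translation $b$ with $b(p_x^c)=(x{=}c)$) to a $\mathsf{CL}_{\Coh,\Theta}$-derivation and use $b(\tr(\varphi))=\varphi$. Your remarks on the bijectivity of $\tr$, and on fixing the base system as \textup{(P1)}, \textup{(C1)}--\textup{(C5)}, \textup{(MP)}, \textup{(RE)}, only make explicit what the paper leaves implicit.
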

	
	\begin{proof}
		For the left-to-right direction, translate each line of a $\VCL_\Theta$ derivation by $\tr$. Propositional tautologies, coalition axioms, and the rules are preserved by the homomorphic definition of $\tr$, and \textup{(V1)}--\textup{(V2)} become exactly the coherence schemata of $\mathsf{CL}_{\Coh,\Theta}$.
		
		For the right-to-left direction, define
		\[
		b:\Lang_{\CL}(P_\Theta)\to\Lang_{\VCL}(\Theta)
		\]
		by
		\[
		b(p_x^c)=(x{=}c),
		\]
		and by commuting with Boolean connectives and coalition modalities. Applying $b$ to each line of a $\mathsf{CL}_{\Coh,\Theta}$ derivation gives a $\VCL_\Theta$ derivation: propositional tautologies, coalition axioms, rules, and coherence axioms are all preserved. Since
		\[
		b(\tr(\varphi))=\varphi,
		\]
		the claim follows.
	\end{proof}
	
	\subsection{Soundness and Completeness}
	\label{subsec:soundness-completeness}
	
	\begin{theorem}[Soundness]
		\label{thm:soundness}
		For every $\varphi\in\Lang_{\VCL}(\Theta)$,
		\[
		\vdash_{\VCL_\Theta}\varphi
		\quad\Longrightarrow\quad
		\models_{\VCL,\Theta}\varphi.
		\]
	\end{theorem}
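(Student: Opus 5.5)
We argue by induction on the length of derivations in $\VCL_\Theta$: every axiom instance is valid over all $\VCL$ models over $\Theta$, and both rules preserve validity. One could instead route through Lemma~\ref{lem:syntactic-correspondence} and Corollary~\ref{cor:validity-correspondence}, reducing to soundness of $\mathsf{CL}_{\Coh,\Theta}$ over coherent propositional models. That would still need a soundness fact for the classical Coalition Logic axioms in the explicit game-form semantics, so a direct check is cleaner and self-contained.

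\textbf{Tautologies and value coherence.} Instances of (P1) are valid because the Boolean clauses of Definition~\ref{def:vcl-semantics} are classical. Axioms (V1) and (V2) are valid by Theorem~\ref{thm:assignment-coherence}.

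\textbf{Rules.} (MP) preserves validity trivially. For (RE), suppose $\varphi\leftrightarrow\psi$ is valid. Then $\varphi$ and $\psi$ hold at exactly the same states of every model. The clause for $[C]$ depends only on the set of successor states where its argument holds, so $[C]\varphi$ and $[C]\psi$ agree at every state.

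\textbf{Coalitional axioms.} Each is checked directly from the modal clause.
\begin{itemize}
\item (C1): a witness $\alpha_C$ for $[C](\varphi\land\psi)$ also witnesses $[C]\varphi$.
\item (C2): $[C]\bot$ would require some successor $o(s,\alpha_C\sqcup\beta_{\overline C})$ to satisfy $\bot$. Because every $\Sigma_i$ is non-empty, $\Sigma_{\overline C}$ is non-empty (for $\overline C=\varnothing$ it is the singleton $\{\langle\rangle\}$), so at least one such successor exists and $[C]\bot$ fails.
\item (C3): since $\Sigma_C\neq\varnothing$, any $\alpha_C$ witnesses $[C]\top$.
\item (C4): let $C\cap D=\varnothing$, with witnesses $\alpha_C$ and $\alpha_D$. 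Take $\alpha_C\sqcup\alpha_D\in\Sigma_{C\cup D}$. Every $\gamma\in\Sigma_{\overline{C\cup D}}$ yields a full profile $\alpha_C\sqcup\alpha_D\sqcup\gamma$. This profile extends $\alpha_C$ by the $\overline C$-action $\alpha_D\sqcup\gamma$, and it extends $\alpha_D$ by the $\overline D$-action $\alpha_C\sqcup\gamma$. Hence the successor satisfies both $\varphi$ and $\psi$.
\item (C5): $\neg[\varnothing]\neg\varphi$ means that the unique empty action $\langle\rangle$ fails to force $\neg\varphi$. So some full profile $\beta\in\Sigma_{\Ag}$ has $o(s,\beta)\models\varphi$. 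Taking $\alpha_{\Ag}=\beta$, with $\Sigma_{\varnothing}$ the singleton as the opponents' action set, gives $[\Ag]\varphi$.
\end{itemize}

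The only delicate point is the bookkeeping with partial profiles. In (C4) we must confirm that the decomposition of a full profile as an extension of $\alpha_C$, and separately of $\alpha_D$, is legitimate; this follows from the uniqueness of $\sqcup$ for disjoint coalitions. In (C2), (C3) and (C5) we must use the empty-product convention together with non-emptiness of the action sets. Everything else is routine.
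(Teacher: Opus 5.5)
Your proof is correct and follows the paper's own argument: you check directly that (P1), (V1)--(V2) (via Theorem~\ref{thm:assignment-coherence}) and (C1)--(C5) are valid from the modal clause, and that (MP) and (RE) preserve validity. The only difference is that you spell out the partial-profile bookkeeping in (C2)--(C5), which the paper leaves implicit.
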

	
	\begin{proof}
		Propositional tautologies are valid, and modus ponens preserves validity. Rule \textup{(RE)} is sound because logically equivalent formulas define the same set of successor states.
		
		The coalitional axioms are valid by the standard one-step argument. Axiom \textup{(C1)} follows because any action guaranteeing $\varphi\land\psi$ also guarantees $\varphi$. Axiom \textup{(C2)} follows from non-empty action sets and totality of the outcome function, since no coalition can force $\bot$. Axiom \textup{(C3)} is immediate. Axiom \textup{(C4)} follows by combining witnessing actions for disjoint coalitions. For \textup{(C5)}, observe that $\neg[\varnothing]\neg\varphi$ means that some complete action profile leads to a successor satisfying $\varphi$; the grand coalition can choose precisely such a complete profile, and hence $[\Ag]\varphi$ holds.
		
		Finally, \textup{(V1)} and \textup{(V2)} are valid by Theorem~\ref{thm:assignment-coherence}.
	\end{proof}
	
	\begin{lemma}[Completeness of the coherent companion]
		\label{lem:cl-coh-completeness}
		For every $\chi\in\Lang_{\CL}(P_\Theta)$,
		\[
		\models_{\mathsf{cohCL},\Theta}\chi
		\quad\text{iff}\quad
		\vdash_{\mathsf{CL}_{\Coh,\Theta}}\chi .
		\]
	\end{lemma}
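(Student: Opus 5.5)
Soundness is the easy direction and follows the pattern of Theorem~\ref{thm:soundness}. The axioms and rules of classical Coalition Logic are valid on all propositional coalition models, so in particular on coherent ones. The coherence schemata are valid on coherent models by definition, since every state satisfies $\Coh_\Theta$. Modus ponens and \textup{(RE)} preserve validity over any fixed class of models. Hence every theorem of $\mathsf{CL}_{\Coh,\Theta}$ is valid over coherent models.

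For completeness I will reduce to Pauly's completeness theorem for classical Coalition Logic with respect to explicit game-form models~\cite{Pauly02,Pauly02Comp}, using the deduction-theorem-style fact that $\Coh_\Theta$ is a single finite formula. Suppose $\not\vdash_{\mathsf{CL}_{\Coh,\Theta}}\chi$. Then the set
\[
\Gamma_0=\{\neg\chi\}\cup\{[\varnothing]^{n}\ldots\}
\]
is not the right object, because coherence must hold at \emph{every} state, including states reachable in several steps and states not reachable at all. So I would instead work directly with the canonical model construction for $\mathsf{CL}_{\Coh,\Theta}$. Take maximal $\mathsf{CL}_{\Coh,\Theta}$-consistent sets as states. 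Every such set contains all instances of the coherence schemata, so the canonical valuation is coherent at every state. Then build the canonical effectivity function exactly as in Pauly's proof, $E(\Gamma,C)=\{\widehat{\varphi}\mid[C]\varphi\in\Gamma\}$ with the usual adjustment for the grand coalition, and prove the truth lemma.

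The extra schemata are non-modal, so the properties used to show that $E$ is playable come only from \textup{(C1)}--\textup{(C5)}, \textup{(RE)} and propositional logic. These properties are outcome monotonicity, liveness, safety, superadditivity and $\Ag$-maximality, and they transfer verbatim. Then apply Pauly's representation theorem (playable effectivity functions over finite languages are realised by game forms; cf.~\cite{GJT13}) to obtain an explicit one-step game model over the same state set with the same valuation. Because the state set and valuation are unchanged, the resulting model is still coherent, and it refutes $\chi$ at the state containing $\neg\chi$.

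The main obstacle is this last step. One has to check that Pauly's passage from the canonical effectivity model to an explicit game-form model either keeps the states, or only replaces states by copies carrying the same valuation, so that coherence is preserved. If the construction introduces new states, such as unravelled or copied states, I would verify that each new state inherits the valuation of an original maximal consistent set. That suffices, because coherence is a pointwise property of valuations.

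A cleaner fallback avoids the canonical model altogether. Use strong completeness of $\CL$ relative to the set of global assumptions $\{\Coh_\Theta\}$, i.e.\ $\CL$ extended by the non-modal axiom $\Coh_\Theta$ together with its closure under the rules. Observe that models of this global-assumption logic are exactly the coherent models. Then argue via the same canonical-model invariance described above.
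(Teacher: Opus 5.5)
Your proposal follows the same route as the paper: soundness from the validity of the $\CL$ principles and the coherence schemata on coherent models, and completeness via the canonical model built from maximal $\mathsf{CL}_{\Coh,\Theta}$-consistent sets. Each such set contains every coherence instance, so the canonical valuation is coherent; the truth lemma is unchanged; and the canonical effectivity functions are realised state by state by game forms with outcomes in the same state set, which fixes states and valuation and so removes the obstacle you flag. One correction: merely playable effectivity functions are guaranteed to be realisable only over finite \emph{outcome} sets, not ``finite languages'', and the canonical state set is infinite even over finitely many atoms, so this step must appeal to truly playable effectivity functions~\cite{GJT13}, as the paper does.
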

	
	\begin{proof}
		Soundness is immediate from the soundness of Coalition Logic and the definition of coherent models: the added coherence schemata are valid at every state of every coherent propositional coalition model.
		
		For completeness, argue contrapositively. Suppose
		\[
		\nvdash_{\mathsf{CL}_{\Coh,\Theta}}\chi .
		\]
		Then $\{\neg\chi\}$ is $\mathsf{CL}_{\Coh,\Theta}$-consistent and can be extended to a maximal $\mathsf{CL}_{\Coh,\Theta}$-consistent set $\Gamma$.
		
		Use the standard canonical construction for classical Coalition Logic, but with maximal consistent sets taken relative to the extended system $\mathsf{CL}_{\Coh,\Theta}$~\cite{Pauly02,Pauly02Comp}. The modal part of the construction is unchanged, since the added axioms are purely propositional coherence principles. Moreover, every theorem of $\mathsf{CL}_{\Coh,\Theta}$ belongs to every canonical state. In particular, every canonical state contains all instances of the coherence schemata. Hence every canonical state satisfies
		\[
		\Coh_\Theta.
		\]
		Therefore the canonical propositional coalition model is coherent.
		
		By the usual truth lemma for the canonical model, the state $\Gamma$ satisfies $\neg\chi$ and therefore falsifies $\chi$. Thus there is a coherent propositional coalition model in which $\chi$ is false.
		
		If one presents the standard completeness theorem for Coalition Logic first in effectivity-function semantics, the final passage to explicit game forms is obtained by the usual representation of playable, in particular truly playable, effectivity functions by strategic game forms~\cite{GJT13}. This representation preserves the state space and the valuation on $P_\Theta$, so the coherence condition is preserved. Hence $\chi$ is not valid over coherent explicit one-step propositional coalition models. This proves the contrapositive of completeness.
	\end{proof}
	
	\begin{theorem}[Completeness]
		\label{thm:completeness}
		For every $\varphi\in\Lang_{\VCL}(\Theta)$,
		\[
		\models_{\VCL,\Theta}\varphi
		\quad\Longrightarrow\quad
		\vdash_{\VCL_\Theta}\varphi .
		\]
	\end{theorem}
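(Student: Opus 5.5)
The proof will be a chain through the propositional companion, using three earlier results in sequence. Suppose $\models_{\VCL,\Theta}\varphi$.

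First, by Corollary~\ref{cor:validity-correspondence}, the translation $\tr(\varphi)$ is valid over coherent propositional coalition models over $P_\Theta$, that is, $\models_{\mathsf{cohCL},\Theta}\tr(\varphi)$.

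Second, the completeness direction of Lemma~\ref{lem:cl-coh-completeness}, applied to $\chi=\tr(\varphi)$, gives $\vdash_{\mathsf{CL}_{\Coh,\Theta}}\tr(\varphi)$.

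Third, the right-to-left direction of Lemma~\ref{lem:syntactic-correspondence} transfers this derivation back along the back-translation $b$. Since $b(\tr(\varphi))=\varphi$, we obtain $\vdash_{\VCL_\Theta}\varphi$, as required.

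No step here involves genuine calculation. The substantive work sits in Lemma~\ref{lem:cl-coh-completeness}, which we take as given. The canonical-model argument there already ensures that the countermodel can be taken coherent, because every canonical state contains all instances of the coherence schemata. It also ensures that the countermodel is an explicit game-form model, via the representation of playable effectivity functions.

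The only point needing care is the back-translation step. We must confirm that $b$ maps every axiom instance of $\mathsf{CL}_{\Coh,\Theta}$ to an axiom instance of $\VCL_\Theta$, and every application of (MP) or (RE) to an application of the same rule. Substitution instances of tautologies remain substitution instances, because $b$ commutes with all connectives and with the modalities $[C]$. The coherence schemata map exactly onto (V1) and (V2). Since this is just Lemma~\ref{lem:syntactic-correspondence}, the completeness theorem then follows by composing the three implications.
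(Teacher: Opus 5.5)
Your proposal is correct and follows essentially the paper's own proof. Corollary~\ref{cor:validity-correspondence} moves validity of $\varphi$ to validity of $\tr(\varphi)$ over coherent models, Lemma~\ref{lem:cl-coh-completeness} then gives $\vdash_{\mathsf{CL}_{\Coh,\Theta}}\tr(\varphi)$, and Lemma~\ref{lem:syntactic-correspondence} (via $b(\tr(\varphi))=\varphi$) returns $\vdash_{\VCL_\Theta}\varphi$; your extra checks on the back-translation only restate what that last lemma already provides.
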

	
	\begin{proof}
		Assume
		\[
		\models_{\VCL,\Theta}\varphi .
		\]
		By Corollary~\ref{cor:validity-correspondence},
		\[
		\models_{\mathsf{cohCL},\Theta}\tr(\varphi).
		\]
		By Lemma~\ref{lem:cl-coh-completeness},
		\[
		\vdash_{\mathsf{CL}_{\Coh,\Theta}}\tr(\varphi).
		\]
		By Lemma~\ref{lem:syntactic-correspondence},
		\[
		\vdash_{\VCL_\Theta}\varphi .
		\]
	\end{proof}
	
	\begin{corollary}[Soundness and completeness]
		\label{cor:sound-complete}
		For every $\varphi\in\Lang_{\VCL}(\Theta)$,
		\[
		\vdash_{\VCL_\Theta}\varphi
		\quad\text{iff}\quad
		\models_{\VCL,\Theta}\varphi .
		\]
	\end{corollary}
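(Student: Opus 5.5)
The corollary combines two results already established, so the plan is to prove the two directions separately and cite them.

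For the left-to-right direction, assume $\vdash_{\VCL_\Theta}\varphi$. Theorem~\ref{thm:soundness} then gives $\models_{\VCL,\Theta}\varphi$ directly. Recall what that theorem rests on: validity of (P1), (C1)--(C5), and (V1)--(V2), where the last two follow from Theorem~\ref{thm:assignment-coherence}. It also uses preservation of validity under (MP) and (RE).

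For the right-to-left direction, assume $\models_{\VCL,\Theta}\varphi$. Theorem~\ref{thm:completeness} then gives $\vdash_{\VCL_\Theta}\varphi$. That proof passes through three steps:
\begin{enumerate}[label=\textup{(\alph*)},leftmargin=*]
\item Corollary~\ref{cor:validity-correspondence} transfers validity to $\models_{\mathsf{cohCL},\Theta}\tr(\varphi)$.
\item Lemma~\ref{lem:cl-coh-completeness} turns this into derivability in $\mathsf{CL}_{\Coh,\Theta}$.
\item Lemma~\ref{lem:syntactic-correspondence} pulls the derivation back to $\VCL_\Theta$, using the back-translation $b$ with $b(\tr(\varphi))=\varphi$.
\end{enumerate}

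No new argument is needed; the two implications together give the biconditional. The only step in the chain with real content is Lemma~\ref{lem:cl-coh-completeness}. There one has to check that the canonical model built over maximal $\mathsf{CL}_{\Coh,\Theta}$-consistent sets is coherent. One must also check that the passage from effectivity functions to explicit game forms (via~\cite{GJT13}) leaves the valuation, and hence coherence, unchanged. Both points have already been recorded in that lemma, so here they are simply cited.
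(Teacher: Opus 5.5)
Your proposal is correct and follows the paper's route exactly: the corollary combines Theorem~\ref{thm:soundness} for the left-to-right direction with Theorem~\ref{thm:completeness} for the right-to-left direction. Your unpacking of the completeness direction through Corollary~\ref{cor:validity-correspondence}, Lemma~\ref{lem:cl-coh-completeness}, and Lemma~\ref{lem:syntactic-correspondence} matches the paper's proof of Theorem~\ref{thm:completeness} step for step.
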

	
	\begin{remark}[Conservativity]
		The axiomatisation confirms the conservative status of $\VCL$ at the level of the coalition modality. The modal principles are exactly the usual one-step principles of Coalition Logic. The only additional axioms are the finite-domain value-coherence principles induced by typed assignments. Thus $\VCL$ does not extend the strategic semantics of $\CL$; it isolates the coherent finite-value fragment of its propositional presentation and treats that fragment as typed semantic structure.
	\end{remark}

	\section{Value Quotients and Projected Effectivity}
	\label{sec:value-quotients}
	
	The assignment semantics of $\VCL$ allows ordinary one-step coalitional ability to be projected onto the value domain of a single variable. For each model, state, and variable, the underlying state-valued game form induces a value-level game form whose outcomes are values of that variable. The value regions enforceable by a coalition are exactly the ordinary effectivity sets of this projected game form. Equivalently, they are generated by the value ranges produced by the coalition's concrete actions.
	
	Throughout this section, fix a typed signature
	\[
	\Theta=(\Ag,X,\{D_x\}_{x\in X})
	\]
	and a $\VCL$ model
	\[
	\mathcal{M}=(S,\{\Sigma_i\}_{i\in\Ag},o,\pi)
	\]
	over $\Theta$. Recall that
	\[
	\Sigma_C=\prod_{i\in C}\Sigma_i
	\]
	for each coalition $C\subseteq\Ag$.
	
	\subsection{Value Regions and Projected Effectivity}
	\label{subsec:projected-effectivity}
	
	For a variable $x\in X$ and a set of values $A\subseteq D_x$, write
	\[
	(x\in A)
	\]
	as an abbreviation for
	\[
	\bigvee_{c\in A}(x{=}c).
	\]
	If $A=\varnothing$, this abbreviation is understood as $\bot$. Thus, for every state $s\in S$,
	\[
	\mathcal{M},s\models (x\in A)
	\quad\text{iff}\quad
	\pi(s)(x)\in A.
	\]
	In particular, $(x\in D_x)$ is valid and $(x\in\varnothing)$ is equivalent to $\bot$.
	
	The formula $[C](x\in A)$ says that coalition $C$ can force the next value of $x$ to lie in the value region $A$. This gives the following projected effectivity family.
	
	\begin{definition}[Projected value-effectivity]
		\label{def:projected-value-effectivity}
		Let $s\in S$, $x\in X$, and $C\subseteq\Ag$. The \emph{projected value-effectivity family} of $C$ for $x$ at $s$ is
		\[
		\mathcal{N}_x^{\mathcal{M}}(s,C)
		\coloneqq
		\{\,A\subseteq D_x
		\mid
		\mathcal{M},s\models [C](x\in A)
		\,\}.
		\]
	\end{definition}
	
	Unfolding the semantics, we have
	\[
	A\in\mathcal{N}_x^{\mathcal{M}}(s,C)
	\]
	iff there exists
	\[
	\alpha_C\in\Sigma_C
	\]
	such that, for every
	\[
	\beta_{\overline C}\in\Sigma_{\overline C},
	\]
	we have
	\[
	\pi\bigl(o(s,\alpha_C\sqcup\beta_{\overline C})\bigr)(x)\in A.
	\]
	Thus $\mathcal{N}_x^{\mathcal{M}}(s,C)$ collects precisely those regions of $D_x$ to which coalition $C$ can restrict the next value of $x$.
	
	It is important that $\mathcal{N}_x^{\mathcal{M}}(s,C)$ is not introduced as a new primitive effectivity semantics. It is obtained by applying the ordinary one-step coalition modality to value-region formulas and then reading the result inside the finite Boolean lattice $\mathcal{P}(D_x)$.
	
	\subsection{Value-Quotient Game Forms}
	\label{subsec:value-quotient-game-forms}
	
	For a fixed state $s$ and variable $x$, the original outcome function can be composed with the value projection
	\[
	q_x:S\to D_x,
	\qquad
	q_x(t)=\pi(t)(x).
	\]
	Equivalently, define an equivalence relation on states by
	\[
	t\equiv_x u
	\quad\text{iff}\quad
	\pi(t)(x)=\pi(u)(x).
	\]
	The value projection $q_x$ identifies successor states that agree on the value of $x$. Factoring the state-valued outcome map through this projection gives a canonical value-level game form.
	
	\begin{definition}[Value-quotient game form]
		\label{def:value-quotient-game-form}
		Let $s\in S$ and $x\in X$. The \emph{$x$-value quotient game form at $s$} is
		\[
		\mathcal{G}_x^s
		=
		(D_x,\{\Sigma_i\}_{i\in\Ag},o_x^s),
		\]
		where
		\[
		o_x^s:\Sigma_{\Ag}\to D_x
		\]
		is defined by
		\[
		o_x^s(\gamma)
		\coloneqq
		\pi(o(s,\gamma))(x).
		\]
	\end{definition}
	
	The term ``quotient'' refers to factoring the state-valued outcome map through the $x$-value equivalence relation. No quotient of the action space is involved. The game form $\mathcal{G}_x^s$ has the same agents and action sets as the original one-step model, but its outcomes are values of $x$ rather than states. Some values in $D_x$ may fail to be realised at $s$ by any action profile; they nevertheless remain part of the typed outcome domain.
	
	For comparison with the standard effectivity-function presentation of Coalition Logic~\cite{Pauly02,GJT13}, recall the ordinary game-form notion. If
	\[
	\mathcal{G}=(O,\{\Sigma_i\}_{i\in\Ag},g)
	\]
	is a game form with outcome set $O$, its induced effectivity family for coalition $C$ is
	\[
	E_{\mathcal{G}}(C)
	=
	\{\,A\subseteq O
	\mid
	\exists\,\alpha_C\in\Sigma_C\
	\forall\,\beta_{\overline C}\in\Sigma_{\overline C}:
	g(\alpha_C\sqcup\beta_{\overline C})\in A
	\,\}.
	\]
	
	\begin{theorem}[Value-quotient representation]
		\label{thm:value-quotient-representation}
		Let $s\in S$ and $x\in X$. For every coalition $C\subseteq\Ag$,
		\[
		\mathcal{N}_x^{\mathcal{M}}(s,C)
		=
		E_{\mathcal{G}_x^s}(C).
		\]
		Equivalently,
		\[
		\mathcal{N}_x^{\mathcal{M}}(s,C)
		=
		\left\{
		A\subseteq D_x
		\mid
		\exists\,\alpha_C\in\Sigma_C\;
		\forall\,\beta_{\overline C}\in\Sigma_{\overline C}:
		o_x^s(\alpha_C\sqcup\beta_{\overline C})\in A
		\right\}.
		\]
	\end{theorem}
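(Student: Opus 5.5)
The plan is a direct unfolding of definitions, established by a chain of equivalences for an arbitrary region $A\subseteq D_x$. First, by Definition~\ref{def:projected-value-effectivity}, $A\in\mathcal{N}_x^{\mathcal{M}}(s,C)$ iff $\mathcal{M},s\models[C](x\in A)$. Applying the modal clause of Definition~\ref{def:vcl-semantics}, this holds iff there is $\alpha_C\in\Sigma_C$ such that for every $\beta_{\overline C}\in\Sigma_{\overline C}$ the successor $t=o(s,\alpha_C\sqcup\beta_{\overline C})$ satisfies $(x\in A)$.

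The second step is the region lemma recorded just before the definition: $\mathcal{M},t\models(x\in A)$ iff $\pi(t)(x)\in A$. I would justify it briefly by unfolding the disjunction $\bigvee_{c\in A}(x{=}c)$ with the atomic clause, treating the case $A=\varnothing$ separately. In that case the formula is $\bot$, and the condition $\pi(t)(x)\in\varnothing$ is also false, so both sides agree. Substituting, the condition becomes $\pi(o(s,\alpha_C\sqcup\beta_{\overline C}))(x)\in A$, which is by Definition~\ref{def:value-quotient-game-form} exactly $o_x^s(\alpha_C\sqcup\beta_{\overline C})\in A$.

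Finally, I would observe that $\mathcal{G}_x^s$ has the same action sets $\Sigma_i$, so the quantifier ranges $\Sigma_C$ and $\Sigma_{\overline C}$ coincide in both descriptions. The resulting condition is then literally the membership condition for $E_{\mathcal{G}_x^s}(C)$ with $O=D_x$ and $g=o_x^s$. Since $A$ was arbitrary, the two families are equal as subsets of $\mathcal{P}(D_x)$, which gives both displayed forms of the statement at once.

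There is no real obstacle. The only points needing care are the degenerate cases. For $A=\varnothing$, the empty disjunction convention has already been fixed. For $C=\varnothing$ and $C=\Ag$, the empty product $\Sigma_\varnothing=\{\langle\rangle\}$ makes the quantifiers trivial in the same way on both sides. I would mention these cases explicitly so that the chain of equivalences is seen to hold uniformly in $C$.
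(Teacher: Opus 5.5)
Your proposal is correct and follows the paper's proof essentially verbatim: it uses the same chain of equivalences, running through the definition of $\mathcal{N}_x^{\mathcal{M}}(s,C)$, the modal clause, the semantics of $(x\in A)$, the definition of $o_x^s$, and finally the definition of $E_{\mathcal{G}_x^s}(C)$. Your explicit checks of $A=\varnothing$ and $C\in\{\varnothing,\Ag\}$ are harmless additions that the paper leaves implicit.
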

	
	\begin{proof}
		Let $A\subseteq D_x$. Then:
		\begin{align*}
			A\in \mathcal{N}_x^{\mathcal{M}}(s,C)
			&\quad\text{iff}\quad
			\mathcal{M},s\models [C](x\in A)\\
			&\quad\text{iff}\quad
			\exists\,\alpha_C\in\Sigma_C\;
			\forall\,\beta_{\overline C}\in\Sigma_{\overline C}:\\
			&\hspace{2.8cm}
			\mathcal{M},o(s,\alpha_C\sqcup\beta_{\overline C})\models (x\in A)\\
			&\quad\text{iff}\quad
			\exists\,\alpha_C\in\Sigma_C\;
			\forall\,\beta_{\overline C}\in\Sigma_{\overline C}:\\
			&\hspace{2.8cm}
			\pi(o(s,\alpha_C\sqcup\beta_{\overline C}))(x)\in A\\
			&\quad\text{iff}\quad
			\exists\,\alpha_C\in\Sigma_C\;
			\forall\,\beta_{\overline C}\in\Sigma_{\overline C}:\\
			&\hspace{2.8cm}
			o_x^s(\alpha_C\sqcup\beta_{\overline C})\in A\\
			&\quad\text{iff}\quad
			A\in E_{\mathcal{G}_x^s}(C).
		\end{align*}
		The first equivalence is Definition~\ref{def:projected-value-effectivity}; the second is the truth clause for $[C]$; the third is the semantics of value regions; the fourth is Definition~\ref{def:value-quotient-game-form}; and the last is the ordinary game-form definition of effectivity.
	\end{proof}
	
	Thus value-level control is not an additional strategic mechanism. It is ordinary one-step coalitional ability factored through the canonical value projection determined by the typed assignment map.
	
	\subsection{Strategic Value-Range Hypergraphs}
	\label{subsec:value-range-hypergraphs}
	
	The quotient perspective gives a concrete action-based representation of projected effectivity. Each action of a coalition determines the set of values that may still occur after the complementary coalition responds.
	
	\begin{definition}[Strategic value range]
		\label{def:strategic-value-range}
		Let $s\in S$, $x\in X$, $C\subseteq\Ag$, and $\alpha_C\in\Sigma_C$. The \emph{$x$-value range} of $\alpha_C$ at $s$ is
		\[
		R_x^C(s,\alpha_C)
		\coloneqq
		\{
		\pi(o(s,\alpha_C\sqcup\beta_{\overline C}))(x)
		\mid
		\beta_{\overline C}\in\Sigma_{\overline C}
		\}.
		\]
	\end{definition}
	
	Thus $R_x^C(s,\alpha_C)$ is the set of all possible next values of $x$ compatible with coalition $C$ choosing $\alpha_C$ and the complementary coalition choosing arbitrarily. Since all action sets are non-empty, every such range is non-empty.
	
	\begin{definition}[Strategic value-range hypergraph]
		\label{def:strategic-value-range-hypergraph}
		Let $s\in S$, $x\in X$, and $C\subseteq\Ag$. The \emph{strategic value-range hypergraph} of $C$ for $x$ at $s$ is
		\[
		\mathcal{R}_x^{\mathcal{M}}(s,C)
		\coloneqq
		\{
		R_x^C(s,\alpha_C)
		\mid
		\alpha_C\in\Sigma_C
		\}.
		\]
	\end{definition}
	
	The members of $\mathcal{R}_x^{\mathcal{M}}(s,C)$ are the value ranges generated by concrete $C$-actions. Smaller ranges correspond to more precise value control.
	
	For a family $\mathcal{H}\subseteq\mathcal{P}(D_x)$, write
	\[
	\uparrow\mathcal{H}
	\coloneqq
	\{\,A\subseteq D_x
	\mid
	\exists H\in\mathcal{H}\text{ such that }H\subseteq A
	\,\}
	\]
	for its upward closure in the Boolean lattice $\mathcal{P}(D_x)$.
	
	\begin{theorem}[Operational hypergraph representation]
		\label{thm:operational-hypergraph-representation}
		For every state $s\in S$, variable $x\in X$, and coalition $C\subseteq\Ag$,
		\[
		\mathcal{N}_x^{\mathcal{M}}(s,C)
		=
		\uparrow\mathcal{R}_x^{\mathcal{M}}(s,C).
		\]
		Equivalently,
		\[
		\mathcal{N}_x^{\mathcal{M}}(s,C)
		=
		\left\{
		A\subseteq D_x
		\mid
		\exists R\in\mathcal{R}_x^{\mathcal{M}}(s,C)
		\text{ such that }
		R\subseteq A
		\right\}.
		\]
	\end{theorem}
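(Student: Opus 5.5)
The plan is to prove the two inclusions directly from Theorem~\ref{thm:value-quotient-representation}, which already expresses $\mathcal{N}_x^{\mathcal{M}}(s,C)$ as the effectivity family $E_{\mathcal{G}_x^s}(C)$ of the value-quotient game form. The key observation is that the inner universal quantifier in that characterisation is exactly an inclusion statement about a value range. For fixed $\alpha_C\in\Sigma_C$ and $A\subseteq D_x$,
\[
\forall\,\beta_{\overline C}\in\Sigma_{\overline C}:\ o_x^s(\alpha_C\sqcup\beta_{\overline C})\in A
\quad\text{iff}\quad
R_x^C(s,\alpha_C)\subseteq A.
\]
This holds because, by Definition~\ref{def:value-quotient-game-form} and Definition~\ref{def:strategic-value-range}, $R_x^C(s,\alpha_C)$ is precisely the image $\{o_x^s(\alpha_C\sqcup\beta_{\overline C})\mid\beta_{\overline C}\in\Sigma_{\overline C}\}$. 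A set contains every element of an image iff it contains the image. I will state and verify this pointwise equivalence first.

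With that in hand, the two inclusions follow in turn.

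\emph{Left to right.} Suppose $A\in\mathcal{N}_x^{\mathcal{M}}(s,C)$. Theorem~\ref{thm:value-quotient-representation} gives a witnessing $\alpha_C$ all of whose completions land in $A$. By the observation, $R_x^C(s,\alpha_C)\subseteq A$. Since $R_x^C(s,\alpha_C)\in\mathcal{R}_x^{\mathcal{M}}(s,C)$ by Definition~\ref{def:strategic-value-range-hypergraph}, we get $A\in\uparrow\mathcal{R}_x^{\mathcal{M}}(s,C)$.

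\emph{Right to left.} Suppose $R\subseteq A$ for some $R\in\mathcal{R}_x^{\mathcal{M}}(s,C)$. By definition of the hypergraph, $R=R_x^C(s,\alpha_C)$ for some $\alpha_C\in\Sigma_C$. Here one picks a generating action and does not work with the range abstractly. The observation then shows that $\alpha_C$ witnesses $A\in E_{\mathcal{G}_x^s}(C)=\mathcal{N}_x^{\mathcal{M}}(s,C)$.

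There is no real obstacle; the argument is an unfolding of definitions. The only points needing care are boundary cases, which I will check explicitly.

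\emph{Case $C=\varnothing$.} Here $\Sigma_{\varnothing}=\{\langle\rangle\}$, so the hypergraph is the single full range of realised values. The argument goes through unchanged.

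\emph{Case $C=\Ag$.} Here $\Sigma_{\overline C}$ is a singleton, so each range is a singleton $\{o_x^s(\alpha_{\Ag})\}$. Again nothing changes.

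\emph{Case $A=\varnothing$.} Since every range is non-empty (action sets are non-empty), no range is contained in $\varnothing$. Hence $\varnothing$ lies in neither family. This is consistent with the reading of $(x\in\varnothing)$ as $\bot$ and with the validity of \textup{(C2)}.
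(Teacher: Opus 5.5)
Your proposal is correct and is essentially the paper's argument: both reduce $A\in\mathcal{N}_x^{\mathcal{M}}(s,C)$ to the existence of some $\alpha_C\in\Sigma_C$ with $R_x^C(s,\alpha_C)\subseteq A$ by unfolding the definitions. The only cosmetic difference is that you go through Theorem~\ref{thm:value-quotient-representation} rather than directly through Definition~\ref{def:projected-value-effectivity}; your boundary-case checks are accurate but not needed for the proof.
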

	
	\begin{proof}
		Let $A\subseteq D_x$. By Definition~\ref{def:projected-value-effectivity},
		\[
		A\in\mathcal{N}_x^{\mathcal{M}}(s,C)
		\]
		iff coalition $C$ has an action $\alpha_C$ such that every compatible response of $\overline C$ leads to a successor whose $x$-value belongs to $A$. By Definition~\ref{def:strategic-value-range}, this is precisely the condition
		\[
		R_x^C(s,\alpha_C)\subseteq A.
		\]
		Equivalently, there is some
		\[
		R\in\mathcal{R}_x^{\mathcal{M}}(s,C)
		\]
		with $R\subseteq A$. Hence
		\[
		\mathcal{N}_x^{\mathcal{M}}(s,C)
		=
		\uparrow\mathcal{R}_x^{\mathcal{M}}(s,C).
		\]
	\end{proof}
	
	\begin{definition}[Minimal strategic generators]
		\label{def:minimal-strategic-generators}
		For $s\in S$, $x\in X$, and $C\subseteq\Ag$, define
		\[
		\mathsf{Gen}_x^{\mathcal{M}}(s,C)
		\coloneqq
		\operatorname{Min}_{\subseteq}
		\mathcal{R}_x^{\mathcal{M}}(s,C),
		\]
		the family of inclusion-minimal strategic value ranges generated by $C$-actions.
	\end{definition}
	
	\begin{corollary}[Minimal strategic ranges]
		\label{cor:minimal-strategic-ranges}
		For every state $s\in S$, variable $x\in X$, and coalition $C\subseteq\Ag$,
		\[
		\operatorname{Min}_{\subseteq}
		\mathcal{N}_x^{\mathcal{M}}(s,C)
		=
		\mathsf{Gen}_x^{\mathcal{M}}(s,C).
		\]
		Consequently, $\mathsf{Gen}_x^{\mathcal{M}}(s,C)$ is an antichain and
		\[
		\mathcal{N}_x^{\mathcal{M}}(s,C)
		=
		\uparrow \mathsf{Gen}_x^{\mathcal{M}}(s,C).
		\]
	\end{corollary}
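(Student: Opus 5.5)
The argument runs entirely through Theorem~\ref{thm:operational-hypergraph-representation}, which gives $\mathcal{N}_x^{\mathcal{M}}(s,C)=\uparrow\mathcal{R}_x^{\mathcal{M}}(s,C)$. Everything then reduces to a general order-theoretic fact about finite families. For any family $\mathcal{H}\subseteq\mathcal{P}(D_x)$ we have $\operatorname{Min}_{\subseteq}(\uparrow\mathcal{H})=\operatorname{Min}_{\subseteq}\mathcal{H}$. We also have $\uparrow\operatorname{Min}_{\subseteq}\mathcal{H}=\uparrow\mathcal{H}$. Finiteness is what makes minimal elements exist: $D_x$ is finite, so $\mathcal{P}(D_x)$ is finite and $\mathcal{R}_x^{\mathcal{M}}(s,C)$ is a finite family of subsets, even if $\Sigma_C$ is infinite. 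I would prove the general fact once and then instantiate $\mathcal{H}=\mathcal{R}_x^{\mathcal{M}}(s,C)$.

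First, the inclusion $\operatorname{Min}(\uparrow\mathcal{H})\subseteq\operatorname{Min}\mathcal{H}$. Let $A$ be minimal in $\uparrow\mathcal{H}$. Then some $H\in\mathcal{H}$ satisfies $H\subseteq A$. Since $H\in\uparrow\mathcal{H}$, minimality of $A$ forces $H=A$, so $A\in\mathcal{H}$. If some $H'\in\mathcal{H}$ had $H'\subsetneq A$, then $H'\in\uparrow\mathcal{H}$ would contradict the minimality of $A$. Hence $A$ is minimal in $\mathcal{H}$.

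Conversely, let $H$ be minimal in $\mathcal{H}$, and suppose $B\in\uparrow\mathcal{H}$ with $B\subseteq H$. Pick $H'\in\mathcal{H}$ with $H'\subseteq B\subseteq H$. Minimality of $H$ in $\mathcal{H}$ gives $H'=H$, hence $B=H$. So $H$ is minimal in $\uparrow\mathcal{H}$. This establishes the displayed equation $\operatorname{Min}_{\subseteq}\mathcal{N}_x^{\mathcal{M}}(s,C)=\mathsf{Gen}_x^{\mathcal{M}}(s,C)$, using Definition~\ref{def:minimal-strategic-generators}.

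The antichain claim is immediate, since any set of inclusion-minimal elements of a family is pairwise incomparable. For the final identity, monotonicity of $\uparrow$ together with $\mathsf{Gen}\subseteq\mathcal{R}$ gives $\uparrow\mathsf{Gen}\subseteq\uparrow\mathcal{R}=\mathcal{N}$. For the reverse inclusion, take $A\in\mathcal{N}$, so $R\subseteq A$ for some $R\in\mathcal{R}$. The subfamily $\{R'\in\mathcal{R}\mid R'\subseteq R\}$ is non-empty and finite. A member of least cardinality in it is minimal in $\mathcal{R}$ and lies below $A$. Hence $A\in\uparrow\mathsf{Gen}$.

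The only real subtlety is this existence of minimal elements below a given range. It depends on the finiteness of $D_x$, not on finiteness of the action sets, and that is why I would argue via cardinality rather than by enumerating actions. The remaining steps are routine unfoldings.
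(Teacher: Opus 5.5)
Your proof is correct and follows the paper's argument essentially step for step: both reduce the claim to Theorem~\ref{thm:operational-hypergraph-representation} and prove $\operatorname{Min}_{\subseteq}(\uparrow\mathcal{H})=\operatorname{Min}_{\subseteq}\mathcal{H}$ by the same two inclusions. Your least-cardinality argument for $\uparrow\mathsf{Gen}_x^{\mathcal{M}}(s,C)=\uparrow\mathcal{R}_x^{\mathcal{M}}(s,C)$ just spells out the step the paper asserts as ``removing non-minimal generators does not change an upward closure''; that step rests on the same finiteness of $D_x$.
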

	
	\begin{proof}
		The family $\mathcal{R}_x^{\mathcal{M}}(s,C)$ is non-empty because $\Sigma_C$ is non-empty. Since $D_x$ is finite, every non-empty subfamily of $\mathcal{P}(D_x)$ has inclusion-minimal members.
		
		For any non-empty family $\mathcal{H}\subseteq\mathcal{P}(D_x)$,
		\[
		\operatorname{Min}_{\subseteq}(\uparrow\mathcal{H})
		=
		\operatorname{Min}_{\subseteq}\mathcal{H}.
		\]
		Indeed, if $H$ is minimal in $\mathcal{H}$ and $A\in\uparrow\mathcal{H}$ with $A\subseteq H$, then some $H'\in\mathcal{H}$ satisfies $H'\subseteq A\subseteq H$. By minimality of $H$, we get $H'=H$, and hence $A=H$. Thus $H$ is minimal in $\uparrow\mathcal{H}$.
		
		Conversely, if $A$ is minimal in $\uparrow\mathcal{H}$, then some $H\in\mathcal{H}$ satisfies $H\subseteq A$. Since $H\in\uparrow\mathcal{H}$, minimality of $A$ gives $A=H$. Moreover, if some $H'\in\mathcal{H}$ satisfied $H'\subsetneq H$, then $H'\in\uparrow\mathcal{H}$ and $H'\subsetneq A$, contradicting minimality of $A$. Hence $H$ is minimal in $\mathcal{H}$.
		
		Applying this fact to
		\[
		\mathcal{H}=\mathcal{R}_x^{\mathcal{M}}(s,C)
		\]
		and using Theorem~\ref{thm:operational-hypergraph-representation} gives
		\[
		\operatorname{Min}_{\subseteq}
		\mathcal{N}_x^{\mathcal{M}}(s,C)
		=
		\operatorname{Min}_{\subseteq}
		\mathcal{R}_x^{\mathcal{M}}(s,C)
		=
		\mathsf{Gen}_x^{\mathcal{M}}(s,C).
		\]
		Inclusion-minimal members form an antichain. Since $\mathcal{N}_x^{\mathcal{M}}(s,C)$ is the upward closure of $\mathcal{R}_x^{\mathcal{M}}(s,C)$, and since removing non-minimal generators does not change an upward closure, we obtain
		\[
		\mathcal{N}_x^{\mathcal{M}}(s,C)
		=
		\uparrow \mathsf{Gen}_x^{\mathcal{M}}(s,C).
		\]
	\end{proof}
	
	\begin{example}
		\label{ex:value-range-hypergraph}
		Let $\Ag=\{1,2\}$, let $D_x=\{a,b,c\}$, and suppose that at state $s$ the induced quotient game form $\mathcal{G}_x^s$ is given by
		\[
		\begin{array}{c|cc}
			& \beta_1 & \beta_2\\
			\hline
			\alpha_1 & a & b\\
			\alpha_2 & b & c
		\end{array}
		\]
		where $\alpha_1,\alpha_2$ are actions of agent $1$ and $\beta_1,\beta_2$ are actions of agent $2$. For $C=\{1\}$,
		\[
		R_x^C(s,\alpha_1)=\{a,b\},
		\qquad
		R_x^C(s,\alpha_2)=\{b,c\}.
		\]
		Hence
		\[
		\mathcal{R}_x^{\mathcal{M}}(s,C)
		=
		\{\{a,b\},\{b,c\}\},
		\]
		and
		\[
		\mathcal{N}_x^{\mathcal{M}}(s,C)
		=
		\uparrow\{\{a,b\},\{b,c\}\}.
		\]
		Thus coalition $\{1\}$ can force $x$ into $\{a,b\}$ or into $\{b,c\}$, but it cannot force any singleton value. The minimal enforceable value regions form the antichain
		\[
		\mathsf{Gen}_x^{\mathcal{M}}(s,\{1\})
		=
		\{\{a,b\},\{b,c\}\}.
		\]
	\end{example}
	
	\begin{remark}[Operational meaning]
		The theorem and corollary give an operational interpretation of value control. Coalition $C$ can enforce a value region $A$ exactly when it has an action whose possible $x$-values are all contained in $A$. Thus the minimally enforceable value regions are not abstract semantic artefacts; they are precisely the inclusion-minimal strategic value ranges generated by concrete coalition actions.
	\end{remark}

	\section{Strategic Exclusion, Duality, and Indeterminacy}
	\label{sec:structural-results}
	
	This section derives structural consequences of projected value-effectivity. No new strategic modality is introduced. All results follow from ordinary one-step game-form effectivity after projection to a finite value domain. The point is that, once the atomic layer is organised into typed value domains, familiar effectivity-theoretic constraints become value-level constraints inside the Boolean lattice $\mathcal{P}(D_x)$.
	
	Throughout this section, let
	\[
	\Theta=(\Ag,X,\{D_x\}_{x\in X})
	\]
	be a finite typed signature, and let
	\[
	\mathcal{M}=(S,\{\Sigma_i\}_{i\in\Ag},o,\pi)
	\]
	be a $\VCL$ model over $\Theta$. As before, for $C\subseteq\Ag$ we write
	\[
	\Sigma_C=\prod_{i\in C}\Sigma_i .
	\]
	
	\subsection{Basic Laws of Projected Value-Effectivity}
	\label{subsec:basic-laws}
	
	Recall that, for $s\in S$, $x\in X$, and $C\subseteq\Ag$,
	\[
	\mathcal{N}_x^{\mathcal{M}}(s,C)
	=
	\{\,A\subseteq D_x
	\mid
	\mathcal{M},s\models [C](x\in A)
	\,\}.
	\]
	By Theorem~\ref{thm:value-quotient-representation}, this is exactly the ordinary effectivity family of coalition $C$ in the value quotient game form $\mathcal{G}_x^s$. The following laws are therefore inherited from standard one-step game-form effectivity.
	
	\begin{theorem}[Basic laws of projected value-effectivity]
		\label{thm:projected-value-effectivity}
		Let $s\in S$ and $x\in X$. For all coalitions $C,E\subseteq\Ag$:
		\begin{enumerate}[label=\textup{(\roman*)},leftmargin=*]
			\item \textup{(\emph{Value monotonicity})}
			If
			\[
			A\in\mathcal{N}_x^{\mathcal{M}}(s,C)
			\quad\text{and}\quad
			A\subseteq B\subseteq D_x,
			\]
			then
			\[
			B\in\mathcal{N}_x^{\mathcal{M}}(s,C).
			\]
			
			\item \textup{(\emph{Coalition monotonicity})}
			If $C\subseteq E$, then
			\[
			\mathcal{N}_x^{\mathcal{M}}(s,C)
			\subseteq
			\mathcal{N}_x^{\mathcal{M}}(s,E).
			\]
			
			\item \textup{(\emph{Liveness and safety})}
			\[
			D_x\in\mathcal{N}_x^{\mathcal{M}}(s,C)
			\qquad\text{and}\qquad
			\varnothing\notin\mathcal{N}_x^{\mathcal{M}}(s,C).
			\]
			
			\item \textup{(\emph{Meet-superadditivity})}
			If $C\cap E=\varnothing$,
			\[
			A\in\mathcal{N}_x^{\mathcal{M}}(s,C),
			\qquad
			B\in\mathcal{N}_x^{\mathcal{M}}(s,E),
			\]
			then
			\[
			A\cap B
			\in
			\mathcal{N}_x^{\mathcal{M}}(s,C\cup E).
			\]
		\end{enumerate}
	\end{theorem}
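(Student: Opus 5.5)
The plan is to work entirely through the operational hypergraph representation of Theorem~\ref{thm:operational-hypergraph-representation}, together with the unfolded semantic clause after Definition~\ref{def:projected-value-effectivity}. Under this representation, $A\in\mathcal{N}_x^{\mathcal{M}}(s,C)$ holds iff some $\alpha_C\in\Sigma_C$ satisfies $R_x^C(s,\alpha_C)\subseteq A$. Each of the four laws then reduces to an elementary fact about value ranges. Where convenient I will also cite the syntactic counterparts \textup{(C1)}--\textup{(C4)}, whose soundness is given by Theorem~\ref{thm:soundness}. The semantic route is cleaner, however, because coalition monotonicity is not literally one of the listed axioms.

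For \textup{(i)}, I would use upward closure directly. Membership in $\uparrow\mathcal{R}_x^{\mathcal{M}}(s,C)$ is preserved under supersets, since $R\subseteq A\subseteq B$ gives $R\subseteq B$.

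For \textup{(iii)}, the two halves go as follows.
\begin{itemize}[leftmargin=*]
\item Liveness: pick any $\alpha_C$, which exists because $\Sigma_C$ is a non-empty product of non-empty sets, or the singleton $\{\langle\rangle\}$ when $C=\varnothing$. Its range lies in $D_x$, so $D_x\in\mathcal{N}_x^{\mathcal{M}}(s,C)$.
\item Safety: every range is non-empty, because $\Sigma_{\overline C}$ is non-empty and $o$ is total. Hence no range is contained in $\varnothing$.
\end{itemize}

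For \textup{(iv)}, take witnesses $\alpha_C$ and $\alpha'_E$ and form the joint action $\gamma=\alpha_C\sqcup\alpha'_E\in\Sigma_{C\cup E}$, which is defined because $C\cap E=\varnothing$. I then claim $R_x^{C\cup E}(s,\gamma)\subseteq A\cap B$. Fix any $\beta\in\Sigma_{\overline{C\cup E}}$. The full profile $\gamma\sqcup\beta$ can be read in two ways:
\begin{itemize}[leftmargin=*]
\item as $\alpha_C\sqcup(\alpha'_E\sqcup\beta)$, where $\alpha'_E\sqcup\beta\in\Sigma_{\overline C}$, so the resulting $x$-value lies in $A$;
\item as $\alpha'_E\sqcup(\alpha_C\sqcup\beta)$, where $\alpha_C\sqcup\beta\in\Sigma_{\overline E}$, so the same value lies in $B$.
\end{itemize}

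For \textup{(ii)}, given $C\subseteq E$ and a witness $\alpha_C$ for $A$, I would extend it arbitrarily to $\alpha_E=\alpha_C\sqcup\delta$ with $\delta\in\Sigma_{E\setminus C}$. This is possible because each $\Sigma_i$ is non-empty. Every response $\beta\in\Sigma_{\overline E}$ then yields $\alpha_E\sqcup\beta=\alpha_C\sqcup(\delta\sqcup\beta)$ with $\delta\sqcup\beta\in\Sigma_{\overline C}$, so $R_x^E(s,\alpha_E)\subseteq R_x^C(s,\alpha_C)\subseteq A$. Alternatively, \textup{(ii)} follows from \textup{(iv)} and \textup{(iii)}: write $E=C\cup(E\setminus C)$ and intersect $A$ with $D_x\in\mathcal{N}_x^{\mathcal{M}}(s,E\setminus C)$.

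The only step needing care is the bookkeeping of partial profiles in \textup{(ii)} and \textup{(iv)}. The identity $\alpha_C\sqcup(\alpha'_E\sqcup\beta)=(\alpha_C\sqcup\alpha'_E)\sqcup\beta$ is needed as equality of complete profiles in $\Sigma_{\Ag}$. This is immediate from the definition of $\sqcup$ as the unique common extension of disjoint partial profiles. The edge cases $C=\varnothing$ or $E=\varnothing$ are handled by the convention $\Sigma_\varnothing=\{\langle\rangle\}$. No genuine obstacle is expected; the laws are the projected images of the standard effectivity properties under Theorem~\ref{thm:value-quotient-representation}.
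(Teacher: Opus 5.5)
Your proof is correct and takes essentially the paper's approach. The condition $R_x^C(s,\alpha_C)\subseteq A$ is just a restatement of the unfolded clause the paper uses, and your four arguments match the paper's item by item: the same witness for (i), an arbitrary extension on $E\setminus C$ for (ii), non-empty action sets plus totality of $o$ for (iii), and combining disjoint witnesses for (iv). Your alternative derivation of (ii), which applies (iv) to the disjoint coalitions $C$ and $E\setminus C$ with $D_x$ from (iii), is also valid, but the paper does not use it.
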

	
	\begin{proof}
		For \textup{(i)}, suppose $A\in\mathcal{N}_x^{\mathcal{M}}(s,C)$ and $A\subseteq B\subseteq D_x$. Let $\alpha_C\in\Sigma_C$ witness that $C$ can force $x\in A$. Then every compatible successor has its $x$-value in $A$, hence also in $B$. The same action therefore witnesses $B\in\mathcal{N}_x^{\mathcal{M}}(s,C)$.
		
		For \textup{(ii)}, suppose $C\subseteq E$ and
		\[
		A\in\mathcal{N}_x^{\mathcal{M}}(s,C).
		\]
		Let $\alpha_C\in\Sigma_C$ witness this. Since all individual action sets are non-empty, choose arbitrary actions for the agents in $E\setminus C$ and extend $\alpha_C$ to an $E$-action $\alpha_E\in\Sigma_E$. Every response of $\overline E$, together with $\alpha_E$, determines a complete profile extending the original $C$-action $\alpha_C$. Hence the guarantee of $A$ is preserved, and so
		\[
		A\in\mathcal{N}_x^{\mathcal{M}}(s,E).
		\]
		
		For \textup{(iii)}, every successor state assigns to $x$ some value in $D_x$. Hence every coalition action guarantees $x\in D_x$, so
		\[
		D_x\in\mathcal{N}_x^{\mathcal{M}}(s,C).
		\]
		On the other hand, $\Sigma_{\overline C}$ is non-empty and the outcome function is total. Thus every $C$-action has at least one compatible complete profile and hence at least one successor. Since no successor can satisfy $x\in\varnothing$, no $C$-action can guarantee $\varnothing$.
		
		For \textup{(iv)}, let $\alpha_C\in\Sigma_C$ witness $A\in\mathcal{N}_x^{\mathcal{M}}(s,C)$, and let $\alpha_E\in\Sigma_E$ witness $B\in\mathcal{N}_x^{\mathcal{M}}(s,E)$. Since $C\cap E=\varnothing$, the combined action
		\[
		\alpha_C\sqcup\alpha_E\in\Sigma_{C\cup E}
		\]
		is available to $C\cup E$. For every response of $\overline{C\cup E}$, the resulting complete profile extends both $\alpha_C$ and $\alpha_E$. The successor therefore has its $x$-value in $A$ and in $B$, hence in $A\cap B$. Thus
		\[
		A\cap B\in\mathcal{N}_x^{\mathcal{M}}(s,C\cup E).
		\]
	\end{proof}
	
	\subsection{Set-Valued Strategic Exclusion}
	\label{subsec:set-valued-exclusion}
	
	Meet-superadditivity implies that disjoint coalitions cannot force disjoint regions of the same value domain. This is not an additional incompatibility axiom; it is ordinary coalition superadditivity applied to mutually exclusive value regions.
	
	\begin{theorem}[Set-valued strategic exclusion]
		\label{thm:set-valued-strategic-exclusion}
		Let $x\in X$, let $A,B\subseteq D_x$, and let $C,E\subseteq\Ag$ be disjoint coalitions. If
		\[
		A\cap B=\varnothing,
		\]
		then
		\[
		\models_{\VCL,\Theta}
		[C](x\in A)\to\neg[E](x\in B).
		\]
	\end{theorem}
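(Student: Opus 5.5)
The plan is to argue semantically, by contradiction, using the projected-effectivity laws of Theorem~\ref{thm:projected-value-effectivity}. Fix an arbitrary $\VCL$ model $\mathcal{M}$ over $\Theta$ and a state $s$. Suppose, towards a contradiction, that
\[
\mathcal{M},s\models [C](x\in A)\land[E](x\in B).
\]
By Definition~\ref{def:projected-value-effectivity}, this means
\[
A\in\mathcal{N}_x^{\mathcal{M}}(s,C)
\quad\text{and}\quad
B\in\mathcal{N}_x^{\mathcal{M}}(s,E).
\]

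Since $C\cap E=\varnothing$, meet-superadditivity (part (iv) of Theorem~\ref{thm:projected-value-effectivity}) yields
\[
A\cap B\in\mathcal{N}_x^{\mathcal{M}}(s,C\cup E).
\]
The hypothesis $A\cap B=\varnothing$ then gives $\varnothing\in\mathcal{N}_x^{\mathcal{M}}(s,C\cup E)$. This contradicts the safety clause, part (iii) of the same theorem, applied to the coalition $C\cup E$. Hence $\mathcal{M},s\models[C](x\in A)\to\neg[E](x\in B)$. Because $\mathcal{M}$ and $s$ were arbitrary, the implication is valid.

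There is essentially no hard step. The only point needing care is the edge case where $A$ or $B$ is empty. In that case $(x\in\varnothing)$ is $\bot$, and safety already shows that the antecedent, or the formula $[E](x\in B)$, is false. The argument above covers this case uniformly, so no separate treatment is needed.

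As a cross-check, one could instead argue syntactically. Axiom (C4) gives $[C\cup E]\bigl((x\in A)\land(x\in B)\bigr)$. Axioms (V2) together with propositional reasoning show that this conjunction is provably equivalent to $\bot$. Rule (RE) then gives $[C\cup E]\bot$, which contradicts (C2). By soundness (Theorem~\ref{thm:soundness}), this also establishes validity.
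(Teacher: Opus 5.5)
Your proposal is correct and follows the paper's proof essentially verbatim. Like the paper, you assume both $[C](x\in A)$ and $[E](x\in B)$ at some state, use meet-superadditivity (Theorem~\ref{thm:projected-value-effectivity}\textup{(iv)}) to put $A\cap B=\varnothing$ into $\mathcal{N}_x^{\mathcal{M}}(s,C\cup E)$, and derive a contradiction with safety \textup{(iii)}. Your syntactic cross-check via \textup{(C4)}, \textup{(V2)}, \textup{(RE)}, \textup{(C2)} and soundness is also valid, though the paper does not use it.
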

	
	\begin{proof}
		Let $\mathcal{M}$ be any $\VCL$ model over $\Theta$ and let $s\in S$. Suppose, for contradiction, that
		\[
		\mathcal{M},s\models [C](x\in A)\land[E](x\in B).
		\]
		Then
		\[
		A\in\mathcal{N}_x^{\mathcal{M}}(s,C),
		\qquad
		B\in\mathcal{N}_x^{\mathcal{M}}(s,E).
		\]
		Since $C\cap E=\varnothing$, Theorem~\ref{thm:projected-value-effectivity}\textup{(iv)} gives
		\[
		A\cap B\in\mathcal{N}_x^{\mathcal{M}}(s,C\cup E).
		\]
		But $A\cap B=\varnothing$, contradicting Theorem~\ref{thm:projected-value-effectivity}\textup{(iii)}. Therefore
		\[
		\mathcal{M},s\models [C](x\in A)\to\neg[E](x\in B).
		\]
		Since $\mathcal{M}$ and $s$ were arbitrary, the formula is valid.
	\end{proof}
	
	\begin{corollary}[Singleton strategic exclusion]
		\label{cor:strategic-exclusion}
		If $c,d\in D_x$ with $c\neq d$ and $C\cap E=\varnothing$, then
		\[
		\models_{\VCL,\Theta}
		[C](x{=}c)\to\neg[E](x{=}d).
		\]
	\end{corollary}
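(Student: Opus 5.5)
The corollary is the special case of Theorem~\ref{thm:set-valued-strategic-exclusion} in which both value regions are singletons, so the plan is a direct instantiation. Given distinct $c,d\in D_x$ and disjoint coalitions $C,E$, I will set $A=\{c\}$ and $B=\{d\}$. Since $c\neq d$, we have $A\cap B=\varnothing$, and the hypotheses of Theorem~\ref{thm:set-valued-strategic-exclusion} hold verbatim. The theorem then gives the validity of
\[
[C](x\in\{c\})\to\neg[E](x\in\{d\}).
\]

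The only step that needs checking is the passage from $(x\in\{c\})$ to the atom $(x{=}c)$. By the abbreviation convention, $(x\in\{c\})$ is the one-disjunct disjunction $\bigvee_{c'\in\{c\}}(x{=}c')$. This is either literally $(x{=}c)$ or, depending on how unary disjunctions are parsed, a formula with the same truth condition $\pi(t)(x)=c$ at every state $t$.

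In the second case, replacement inside the modality is justified semantically. The clause for $[C]$ depends only on which successor states satisfy the embedded formula, so formulas true at the same states can be exchanged under $[C]$ and $[E]$ without changing truth value. This is the same observation used for the soundness of \textup{(RE)} in Theorem~\ref{thm:soundness}. It turns the displayed validity into
\[
\models_{\VCL,\Theta}[C](x{=}c)\to\neg[E](x{=}d).
\]

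No step is a real obstacle; the only point of care is this notational identification of singleton regions with atoms. As an independent check, one could argue directly. If $\alpha_C$ forces $x=c$ and $\alpha_E$ forces $x=d$, then completing $\alpha_C\sqcup\alpha_E$ with any $\overline{C\cup E}$-action gives a successor at which $x$ has both values $c$ and $d$. That contradicts exclusivity in Theorem~\ref{thm:assignment-coherence}. Such a completion exists because action sets are non-empty.
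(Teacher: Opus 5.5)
Your proposal is correct and matches the paper's proof: the paper also obtains the corollary by instantiating Theorem~\ref{thm:set-valued-strategic-exclusion} with $A=\{c\}$ and $B=\{d\}$. Your extra care about identifying $(x\in\{c\})$ with $(x{=}c)$ and your direct check via combining $\alpha_C$ and $\alpha_E$ are both sound, though the paper treats these as immediate.
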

	
	\begin{proof}
		Apply Theorem~\ref{thm:set-valued-strategic-exclusion} to
		\[
		A=\{c\},
		\qquad
		B=\{d\}.
		\]
	\end{proof}
	
	Thus exclusion is a value-level shadow of superadditivity: if two disjoint coalitions could force disjoint value regions of the same variable, their union would force the empty region, which is impossible.
	
	\subsection{Transversal Polarity}
	\label{subsec:transversal-polarity}
	
	Projected value-effectivity families are upward-closed families in the Boolean lattice $\mathcal{P}(D_x)$. The incompatibility of disjoint enforceable value regions can therefore be expressed as a transversal condition.
	
	\begin{definition}[Transversal dual]
		\label{def:transversal-dual}
		Let $D$ be a finite set and let
		\[
		\mathcal{F}\subseteq\mathcal{P}(D).
		\]
		The \emph{transversal dual} of $\mathcal{F}$ is
		\[
		\mathcal{F}^{\#}
		\coloneqq
		\{\,B\subseteq D
		\mid
		\text{for every }A\in\mathcal{F},\ A\cap B\neq\varnothing
		\,\}.
		\]
	\end{definition}
	
	\begin{theorem}[Strategic transversal polarity]
		\label{thm:strategic-transversal-polarity}
		Let $s\in S$ and $x\in X$. If $C,E\subseteq\Ag$ are disjoint, then
		\[
		\mathcal{N}_x^{\mathcal{M}}(s,E)
		\subseteq
		\bigl(\mathcal{N}_x^{\mathcal{M}}(s,C)\bigr)^{\#},
		\]
		and symmetrically,
		\[
		\mathcal{N}_x^{\mathcal{M}}(s,C)
		\subseteq
		\bigl(\mathcal{N}_x^{\mathcal{M}}(s,E)\bigr)^{\#}.
		\]
	\end{theorem}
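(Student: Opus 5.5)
The plan is to unfold the definition of the transversal dual and reduce the claim directly to the set-valued exclusion already established. Fix $s\in S$, $x\in X$, and disjoint coalitions $C,E\subseteq\Ag$. To show $\mathcal{N}_x^{\mathcal{M}}(s,E)\subseteq\bigl(\mathcal{N}_x^{\mathcal{M}}(s,C)\bigr)^{\#}$, take an arbitrary $B\in\mathcal{N}_x^{\mathcal{M}}(s,E)$. By Definition~\ref{def:transversal-dual} with $D=D_x$, it suffices to check that $A\cap B\neq\varnothing$ for every $A\in\mathcal{N}_x^{\mathcal{M}}(s,C)$.

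So let $A\in\mathcal{N}_x^{\mathcal{M}}(s,C)$ and suppose, towards a contradiction, that $A\cap B=\varnothing$. Since $C\cap E=\varnothing$, Theorem~\ref{thm:projected-value-effectivity}\textup{(iv)} yields $A\cap B\in\mathcal{N}_x^{\mathcal{M}}(s,C\cup E)$, that is, $\varnothing\in\mathcal{N}_x^{\mathcal{M}}(s,C\cup E)$. This contradicts the safety clause of Theorem~\ref{thm:projected-value-effectivity}\textup{(iii)}. Equivalently, one may invoke Theorem~\ref{thm:set-valued-strategic-exclusion}: $\mathcal{M},s\models[C](x\in A)$ together with disjointness of $A$ and $B$ forces $\mathcal{M},s\not\models[E](x\in B)$, contradicting $B\in\mathcal{N}_x^{\mathcal{M}}(s,E)$. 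Hence $A\cap B\neq\varnothing$, and $B$ lies in the transversal dual.

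The symmetric inclusion follows by exchanging the roles of $C$ and $E$, since disjointness is symmetric and the argument used nothing else about the two coalitions. There is no real obstacle here. The only points needing care are to apply the definition of $\#$ with ambient set $D_x$ and to note that the case $C=\varnothing$ or $E=\varnothing$ requires no separate treatment, because Theorem~\ref{thm:projected-value-effectivity}\textup{(iv)} already covers it.
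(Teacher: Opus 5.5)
Your proposal is correct and matches the paper's own proof: fix $B\in\mathcal{N}_x^{\mathcal{M}}(s,E)$ and $A\in\mathcal{N}_x^{\mathcal{M}}(s,C)$, use meet-superadditivity (Theorem~\ref{thm:projected-value-effectivity}\textup{(iv)}) to get $A\cap B\in\mathcal{N}_x^{\mathcal{M}}(s,C\cup E)$, and conclude $A\cap B\neq\varnothing$ from safety \textup{(iii)}, with the second inclusion obtained by symmetry. The only differences are cosmetic: you argue by contradiction and mention the equivalent route through Theorem~\ref{thm:set-valued-strategic-exclusion}, whereas the paper argues directly.
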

	
	\begin{proof}
		We prove the first inclusion; the second is symmetric. Let
		\[
		B\in\mathcal{N}_x^{\mathcal{M}}(s,E)
		\]
		and let
		\[
		A\in\mathcal{N}_x^{\mathcal{M}}(s,C).
		\]
		Since $C\cap E=\varnothing$, Theorem~\ref{thm:projected-value-effectivity}\textup{(iv)} gives
		\[
		A\cap B\in\mathcal{N}_x^{\mathcal{M}}(s,C\cup E).
		\]
		By Theorem~\ref{thm:projected-value-effectivity}\textup{(iii)},
		\[
		\varnothing\notin\mathcal{N}_x^{\mathcal{M}}(s,C\cup E).
		\]
		Hence
		\[
		A\cap B\neq\varnothing.
		\]
		Since $A$ was arbitrary, $B$ intersects every member of $\mathcal{N}_x^{\mathcal{M}}(s,C)$. Thus
		\[
		B\in
		\bigl(\mathcal{N}_x^{\mathcal{M}}(s,C)\bigr)^{\#}.
		\]
	\end{proof}
	
	The operation $(\cdot)^{\#}$ is antitone:
	\[
	\mathcal{F}\subseteq\mathcal{G}
	\quad\Longrightarrow\quad
	\mathcal{G}^{\#}\subseteq\mathcal{F}^{\#}.
	\]
	Thus projected value-effectivity families of disjoint coalitions are constrained by a hypergraph transversal relation.
	
	The same constraint can be stated directly at the level of minimal strategic generators.
	
	\begin{corollary}[Generator-level transversal constraint]
		\label{cor:generator-transversal-constraint}
		Let $s\in S$ and $x\in X$. If $C,E\subseteq\Ag$ are disjoint, then for all
		\[
		G\in \mathsf{Gen}_x^{\mathcal{M}}(s,C)
		\qquad\text{and}\qquad
		H\in \mathsf{Gen}_x^{\mathcal{M}}(s,E),
		\]
		we have
		\[
		G\cap H\neq\varnothing.
		\]
	\end{corollary}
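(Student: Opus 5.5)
The plan is to reduce the generator-level claim to Theorem~\ref{thm:strategic-transversal-polarity} by observing that minimal strategic generators are themselves members of the projected effectivity families.

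First, fix $G\in\mathsf{Gen}_x^{\mathcal{M}}(s,C)$ and $H\in\mathsf{Gen}_x^{\mathcal{M}}(s,E)$. By Corollary~\ref{cor:minimal-strategic-ranges}, $\mathsf{Gen}_x^{\mathcal{M}}(s,C)=\operatorname{Min}_{\subseteq}\mathcal{N}_x^{\mathcal{M}}(s,C)$. Since minimal members of a family belong to it, $G\in\mathcal{N}_x^{\mathcal{M}}(s,C)$. Alternatively, $G\in\mathcal{R}_x^{\mathcal{M}}(s,C)\subseteq\uparrow\mathcal{R}_x^{\mathcal{M}}(s,C)$, and this equals $\mathcal{N}_x^{\mathcal{M}}(s,C)$ by Theorem~\ref{thm:operational-hypergraph-representation}. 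Likewise $H\in\mathcal{N}_x^{\mathcal{M}}(s,E)$.

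Second, since $C\cap E=\varnothing$, Theorem~\ref{thm:strategic-transversal-polarity} gives
\[
H\in\bigl(\mathcal{N}_x^{\mathcal{M}}(s,C)\bigr)^{\#}.
\]
By Definition~\ref{def:transversal-dual}, $H$ meets every member of $\mathcal{N}_x^{\mathcal{M}}(s,C)$, and in particular $G\cap H\neq\varnothing$.

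As a sanity check, and as a direct proof that avoids the transversal machinery, one can invoke Theorem~\ref{thm:projected-value-effectivity}\textup{(iv)} and \textup{(iii)}. These give $G\cap H\in\mathcal{N}_x^{\mathcal{M}}(s,C\cup E)$ and $\varnothing\notin\mathcal{N}_x^{\mathcal{M}}(s,C\cup E)$, so $G\cap H\neq\varnothing$.

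There is no real obstacle here. The only point needing care is the membership step, namely that generators lie in the effectivity family, and Corollary~\ref{cor:minimal-strategic-ranges} settles it.
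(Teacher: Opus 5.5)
Your proposal is correct and follows the paper's own proof: first place the generators in the projected effectivity families via Corollary~\ref{cor:minimal-strategic-ranges}, then apply Theorem~\ref{thm:strategic-transversal-polarity}. Your direct argument via Theorem~\ref{thm:projected-value-effectivity}\textup{(iii)} and \textup{(iv)} is also correct, and it simply unfolds the proof of that polarity theorem.
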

	
	\begin{proof}
		By Corollary~\ref{cor:minimal-strategic-ranges},
		\[
		\mathsf{Gen}_x^{\mathcal{M}}(s,C)
		\subseteq
		\mathcal{N}_x^{\mathcal{M}}(s,C),
		\qquad
		\mathsf{Gen}_x^{\mathcal{M}}(s,E)
		\subseteq
		\mathcal{N}_x^{\mathcal{M}}(s,E).
		\]
		Apply Theorem~\ref{thm:strategic-transversal-polarity}.
	\end{proof}
	
	\subsection{Boundary Transversal Duality}
	\label{subsec:boundary-duality}
	
	For arbitrary disjoint coalitions, transversal polarity gives an inclusion. For the boundary coalitions $\varnothing$ and $\Ag$, this inclusion becomes an exact duality.
	
	\begin{definition}[One-step value range]
		\label{def:one-step-value-range}
		Let $s\in S$ and $x\in X$. The \emph{one-step possible value range} of $x$ at $s$ is
		\[
		R_x^{\mathcal{M}}(s)
		\coloneqq
		\{
		\pi(o(s,\gamma))(x)
		\mid
		\gamma\in\Sigma_{\Ag}
		\}.
		\]
	\end{definition}
	
	\begin{theorem}[Boundary transversal duality]
		\label{thm:boundary-transversal-duality}
		Let $s\in S$ and $x\in X$. Then
		\[
		\mathcal{N}_x^{\mathcal{M}}(s,\varnothing)
		=
		\{\,A\subseteq D_x
		\mid
		R_x^{\mathcal{M}}(s)\subseteq A
		\,\},
		\]
		and
		\[
		\mathcal{N}_x^{\mathcal{M}}(s,\Ag)
		=
		\{\,A\subseteq D_x
		\mid
		A\cap R_x^{\mathcal{M}}(s)\neq\varnothing
		\,\}.
		\]
		Consequently,
		\[
		\mathcal{N}_x^{\mathcal{M}}(s,\Ag)
		=
		\bigl(\mathcal{N}_x^{\mathcal{M}}(s,\varnothing)\bigr)^{\#},
		\]
		and
		\[
		\mathcal{N}_x^{\mathcal{M}}(s,\varnothing)
		=
		\bigl(\mathcal{N}_x^{\mathcal{M}}(s,\Ag)\bigr)^{\#}.
		\]
	\end{theorem}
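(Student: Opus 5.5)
The two characterisations follow by unfolding the definitions through Theorem~\ref{thm:operational-hypergraph-representation}; the duality statements then reduce to a small lattice-theoretic computation in $\mathcal{P}(D_x)$.

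\textbf{Empty coalition.} Since $\Sigma_\varnothing=\{\langle\rangle\}$ and $\overline{\varnothing}=\Ag$, the hypergraph $\mathcal{R}_x^{\mathcal{M}}(s,\varnothing)$ consists of the single range
\[
R_x^{\varnothing}(s,\langle\rangle)=\{\pi(o(s,\gamma))(x)\mid\gamma\in\Sigma_{\Ag}\}=R_x^{\mathcal{M}}(s).
\]
Hence its upward closure is $\{A\mid R_x^{\mathcal{M}}(s)\subseteq A\}$, which is the first identity.

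\textbf{Grand coalition.} Since $\overline{\Ag}=\varnothing$ and $\Sigma_\varnothing$ is a singleton, every $\alpha\in\Sigma_{\Ag}$ has the singleton range $\{\pi(o(s,\alpha))(x)\}$. Thus $\mathcal{R}_x^{\mathcal{M}}(s,\Ag)=\{\{c\}\mid c\in R_x^{\mathcal{M}}(s)\}$. A set $A$ lies in the upward closure iff it contains some $c\in R_x^{\mathcal{M}}(s)$, i.e.\ iff $A\cap R_x^{\mathcal{M}}(s)\neq\varnothing$. This is the second identity.

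\textbf{Duality.} Write $R=R_x^{\mathcal{M}}(s)$, and note that $R\neq\varnothing$ because all action sets are non-empty.

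For the first duality, let $\mathcal{F}=\{A\mid R\subseteq A\}$. Every $B\in\mathcal{F}^{\#}$ meets $R\in\mathcal{F}$. Conversely, if $B\cap R\neq\varnothing$, then $B$ meets every superset of $R$. So $\mathcal{F}^{\#}=\{B\mid B\cap R\neq\varnothing\}$, which equals $\mathcal{N}_x^{\mathcal{M}}(s,\Ag)$ by the second identity.

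For the second duality, let $\mathcal{G}=\{A\mid A\cap R\neq\varnothing\}$.
\begin{itemize}
\item If $R\subseteq B$, then any $A\in\mathcal{G}$ contains some $c\in R\subseteq B$, so $A\cap B\neq\varnothing$. Hence $B\in\mathcal{G}^{\#}$.
\item If $R\not\subseteq B$, pick $c\in R\setminus B$. Then $\{c\}\in\mathcal{G}$ and $\{c\}\cap B=\varnothing$, so $B\notin\mathcal{G}^{\#}$.
\end{itemize}
Thus $\mathcal{G}^{\#}=\{B\mid R\subseteq B\}=\mathcal{N}_x^{\mathcal{M}}(s,\varnothing)$.

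\textbf{Main obstacle.} The only delicate point is the singleton-range witness $\{c\}$ in the reverse duality. It relies on every value $c\in R$ being realised by some complete profile, so that $\{c\}$ is itself a member of $\mathcal{G}$. Non-emptiness of the action sets is also needed, so that $R\neq\varnothing$ and the degenerate cases of the empty-product conventions behave as expected.
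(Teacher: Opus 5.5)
Your proof is correct and follows the paper's argument. You obtain the two characterisations from the single range $R_x^{\mathcal{M}}(s)$ of the empty action and from the singleton ranges of complete profiles; routing this through Theorem~\ref{thm:operational-hypergraph-representation} rather than unfolding the modal clause directly is an inessential difference. The dualities then use the same least-member and singleton-witness arguments as the paper. As a minor aside, neither transversal identity actually needs $R_x^{\mathcal{M}}(s)\neq\varnothing$: both hold for an arbitrary $R\subseteq D_x$, so the ``obstacle'' you flag is not really one.
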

	
	\begin{proof}
		For the empty coalition, there is a unique empty action. It guarantees $A$ precisely when every complete action profile leads to a successor whose $x$-value lies in $A$. Hence
		\[
		A\in\mathcal{N}_x^{\mathcal{M}}(s,\varnothing)
		\]
		iff
		\[
		R_x^{\mathcal{M}}(s)\subseteq A.
		\]
		
		For the grand coalition, there is no opposing coalition. Thus
		\[
		A\in\mathcal{N}_x^{\mathcal{M}}(s,\Ag)
		\]
		iff there exists a complete profile $\gamma\in\Sigma_{\Ag}$ such that
		\[
		\pi(o(s,\gamma))(x)\in A,
		\]
		which is equivalent to
		\[
		A\cap R_x^{\mathcal{M}}(s)\neq\varnothing.
		\]
		
		It remains to verify the two transversal identities. Since all action sets are non-empty,
		\[
		R_x^{\mathcal{M}}(s)\neq\varnothing.
		\]
		The transversal dual of the principal upset
		\[
		\{\,A\subseteq D_x\mid R_x^{\mathcal{M}}(s)\subseteq A\,\}
		\]
		is exactly the family of subsets of $D_x$ intersecting $R_x^{\mathcal{M}}(s)$. Indeed, a set intersects every member of this principal upset iff it intersects its least member $R_x^{\mathcal{M}}(s)$.
		
		Conversely, the transversal dual of
		\[
		\{\,A\subseteq D_x\mid A\cap R_x^{\mathcal{M}}(s)\neq\varnothing\,\}
		\]
		is the family of subsets containing $R_x^{\mathcal{M}}(s)$. If $B$ intersects every subset meeting $R_x^{\mathcal{M}}(s)$, then in particular $B$ intersects each singleton $\{r\}$ with $r\in R_x^{\mathcal{M}}(s)$, so $R_x^{\mathcal{M}}(s)\subseteq B$. The converse is immediate.
	\end{proof}
	
	\subsection{Residual Value Indeterminacy}
	\label{subsec:residual-indeterminacy}
	
	Projected value-effectivity also measures how much value uncertainty remains after a coalition has chosen an optimal value-restricting action.
	
	\begin{definition}[Residual value indeterminacy]
		\label{def:residual-indeterminacy}
		Let $s\in S$, $x\in X$, and $C\subseteq\Ag$. The \emph{residual value indeterminacy} of $C$ over $x$ at $s$ is
		\[
		\iota_x^{\mathcal{M}}(s,C)
		\coloneqq
		\min\{\,|A|
		\mid
		A\in\mathcal{N}_x^{\mathcal{M}}(s,C)
		\,\}.
		\]
	\end{definition}
	
	The minimum is well-defined because $D_x$ is finite,
	\[
	D_x\in\mathcal{N}_x^{\mathcal{M}}(s,C),
	\]
	and
	\[
	\varnothing\notin\mathcal{N}_x^{\mathcal{M}}(s,C).
	\]
	
	\begin{theorem}[Operational characterisation of residual indeterminacy]
		\label{thm:operational-residual-indeterminacy}
		Let $s\in S$, $x\in X$, and $C\subseteq\Ag$. Then
		\[
		\iota_x^{\mathcal{M}}(s,C)
		=
		\min_{\alpha_C\in\Sigma_C}
		\left|
		R_x^C(s,\alpha_C)
		\right|.
		\]
		Equivalently,
		\[
		\iota_x^{\mathcal{M}}(s,C)
		=
		\min
		\{\,|R| \mid R\in\mathcal{R}_x^{\mathcal{M}}(s,C)\,\}.
		\]
	\end{theorem}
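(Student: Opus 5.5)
The plan is to reduce the statement directly to the operational hypergraph representation (Theorem~\ref{thm:operational-hypergraph-representation}), which gives
\[
\mathcal{N}_x^{\mathcal{M}}(s,C)=\uparrow\mathcal{R}_x^{\mathcal{M}}(s,C).
\]
The argument is then a purely order-theoretic fact: the minimum cardinality over an upward closure equals the minimum cardinality over its generators.

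First I will note that both minima are well defined. The left side was already shown well defined after Definition~\ref{def:residual-indeterminacy}. On the right, $\Sigma_C$ is non-empty, so $\mathcal{R}_x^{\mathcal{M}}(s,C)$ is a non-empty family of subsets of the finite set $D_x$. Hence $\{|R|\mid R\in\mathcal{R}_x^{\mathcal{M}}(s,C)\}$ is a non-empty finite set of natural numbers and attains a minimum. It may happen that $\Sigma_C$ is infinite, but the set of sizes is still finite. The two displayed forms of the right-hand side coincide by Definition~\ref{def:strategic-value-range-hypergraph}, since the ranges $R_x^C(s,\alpha_C)$ are exactly the members of $\mathcal{R}_x^{\mathcal{M}}(s,C)$.

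Next I will prove the two inequalities.

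\emph{($\le$)} Pick $\alpha_C$ minimising $|R_x^C(s,\alpha_C)|$. Since $R_x^C(s,\alpha_C)\in\mathcal{R}_x^{\mathcal{M}}(s,C)\subseteq\uparrow\mathcal{R}_x^{\mathcal{M}}(s,C)=\mathcal{N}_x^{\mathcal{M}}(s,C)$, we get $\iota_x^{\mathcal{M}}(s,C)\le|R_x^C(s,\alpha_C)|$.

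\emph{($\ge$)} Let $A\in\mathcal{N}_x^{\mathcal{M}}(s,C)$ attain $\iota_x^{\mathcal{M}}(s,C)$. By the representation theorem there is some $R\in\mathcal{R}_x^{\mathcal{M}}(s,C)$ with $R\subseteq A$. Then $|R|\le|A|$, so the minimum over ranges is at most $\iota_x^{\mathcal{M}}(s,C)$.

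Combining the two inequalities gives the claimed equality. Alternatively, one could pass through Corollary~\ref{cor:minimal-strategic-ranges}: a minimum-cardinality member of $\mathcal{N}_x^{\mathcal{M}}(s,C)$ is necessarily inclusion-minimal, hence lies in $\mathsf{Gen}_x^{\mathcal{M}}(s,C)\subseteq\mathcal{R}_x^{\mathcal{M}}(s,C)$. The direct two-inequality argument is simpler, however.

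There is no real obstacle here. The only point requiring care is the existence of the minimum over $\alpha_C\in\Sigma_C$ when $\Sigma_C$ is infinite. This is handled by observing that the sizes range over the finite set $\{1,\dots,|D_x|\}$; the values are at least $1$ because every range is non-empty.
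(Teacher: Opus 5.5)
Your proposal is correct and follows essentially the same route as the paper: both use $\mathcal{N}_x^{\mathcal{M}}(s,C)=\uparrow\mathcal{R}_x^{\mathcal{M}}(s,C)$ to note that every strategic value range is enforceable and every enforceable region contains one, so the minimal cardinalities coincide. Your two-inequality split, and your remark that the minimum over $\Sigma_C$ exists even when the action sets are infinite (because the sizes lie in $\{1,\dots,|D_x|\}$), spell out what the paper's proof leaves implicit.
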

	
	\begin{proof}
		By Theorem~\ref{thm:operational-hypergraph-representation},
		\[
		\mathcal{N}_x^{\mathcal{M}}(s,C)
		=
		\uparrow\mathcal{R}_x^{\mathcal{M}}(s,C).
		\]
		Thus every enforceable region contains a strategic value range generated by some $C$-action, and every strategic value range is itself enforceable. Therefore the least cardinality of an enforceable value region is exactly the least cardinality of a strategic value range:
		\[
		\min\{\,|A| \mid A\in\mathcal{N}_x^{\mathcal{M}}(s,C)\,\}
		=
		\min\{\,|R| \mid R\in\mathcal{R}_x^{\mathcal{M}}(s,C)\,\}.
		\]
		Expanding the definition of $\mathcal{R}_x^{\mathcal{M}}(s,C)$ gives the displayed formula.
	\end{proof}
	
	\begin{proposition}[Basic properties of residual indeterminacy]
		\label{prop:residual-indeterminacy}
		Let $s\in S$ and $x\in X$. For every coalition $C\subseteq\Ag$,
		\[
		1\leq
		\iota_x^{\mathcal{M}}(s,C)
		\leq
		|D_x|.
		\]
		If $C\subseteq E$, then
		\[
		\iota_x^{\mathcal{M}}(s,E)
		\leq
		\iota_x^{\mathcal{M}}(s,C).
		\]
		Moreover,
		\[
		\iota_x^{\mathcal{M}}(s,C)=1
		\]
		if and only if there exists $c\in D_x$ such that
		\[
		\mathcal{M},s\models[C](x{=}c).
		\]
		In particular,
		\[
		\iota_x^{\mathcal{M}}(s,\varnothing)
		=
		|R_x^{\mathcal{M}}(s)|,
		\qquad
		\iota_x^{\mathcal{M}}(s,\Ag)=1.
		\]
	\end{proposition}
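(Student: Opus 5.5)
The plan is to reduce every clause to facts already established: Theorem~\ref{thm:projected-value-effectivity} for the bounds and monotonicity, Theorem~\ref{thm:operational-residual-indeterminacy} or the definition of $\iota$ for the singleton characterisation, and Theorem~\ref{thm:boundary-transversal-duality} for the boundary values.

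\textbf{Bounds.} Since $\iota_x^{\mathcal{M}}(s,C)$ is the least cardinality of a member of $\mathcal{N}_x^{\mathcal{M}}(s,C)$, the upper bound follows from $D_x\in\mathcal{N}_x^{\mathcal{M}}(s,C)$ by Theorem~\ref{thm:projected-value-effectivity}\textup{(iii)}. The lower bound follows because $\varnothing\notin\mathcal{N}_x^{\mathcal{M}}(s,C)$, so every member has cardinality at least $1$. Alternatively, each strategic value range $R_x^C(s,\alpha_C)$ is non-empty.

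\textbf{Monotonicity.} Suppose $C\subseteq E$. Coalition monotonicity, Theorem~\ref{thm:projected-value-effectivity}\textup{(ii)}, gives $\mathcal{N}_x^{\mathcal{M}}(s,C)\subseteq\mathcal{N}_x^{\mathcal{M}}(s,E)$. A minimum taken over a larger family can only decrease or stay the same, so $\iota_x^{\mathcal{M}}(s,E)\le\iota_x^{\mathcal{M}}(s,C)$.

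\textbf{Singleton characterisation.} Suppose $\iota_x^{\mathcal{M}}(s,C)=1$. Then some $A\in\mathcal{N}_x^{\mathcal{M}}(s,C)$ has $|A|=1$, say $A=\{c\}$. Thus $\mathcal{M},s\models[C](x\in\{c\})$, and $(x\in\{c\})$ is literally the one-disjunct abbreviation $(x{=}c)$, so $\mathcal{M},s\models[C](x{=}c)$. Conversely, $\mathcal{M},s\models[C](x{=}c)$ puts $\{c\}$ in the family, so $\iota\le 1$, and the lower bound gives equality.

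\textbf{Boundary values.} By Theorem~\ref{thm:boundary-transversal-duality}, $\mathcal{N}_x^{\mathcal{M}}(s,\varnothing)$ is the principal upset generated by $R_x^{\mathcal{M}}(s)$. Its least-cardinality member is therefore $R_x^{\mathcal{M}}(s)$ itself, which lies in the family, and every other member has cardinality at least $|R_x^{\mathcal{M}}(s)|$. For $\Ag$, the same theorem characterises the family as the sets meeting $R_x^{\mathcal{M}}(s)$. Since $R_x^{\mathcal{M}}(s)\ne\varnothing$ (action sets are non-empty), pick $r\in R_x^{\mathcal{M}}(s)$; then $\{r\}$ belongs to the family, so $\iota_x^{\mathcal{M}}(s,\Ag)=1$.

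None of these steps is a real obstacle. The only point needing care is the identification of $(x\in\{c\})$ with the atom $(x{=}c)$, which holds by the abbreviation convention since a one-element disjunction is just its disjunct.
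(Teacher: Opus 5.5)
Your proof is correct and follows the paper's argument essentially verbatim for the bounds, coalition monotonicity, and the singleton characterisation. The only deviation is in the boundary cases. There you read the minima off the explicit descriptions of $\mathcal{N}_x^{\mathcal{M}}(s,\varnothing)$ and $\mathcal{N}_x^{\mathcal{M}}(s,\Ag)$ in Theorem~\ref{thm:boundary-transversal-duality}, whereas the paper applies Theorem~\ref{thm:operational-residual-indeterminacy} directly (the empty action's range is $R_x^{\mathcal{M}}(s)$, and every complete profile has a singleton range); this difference is inessential.
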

	
	\begin{proof}
		The bounds follow from Theorem~\ref{thm:projected-value-effectivity}\textup{(iii)}:
		\[
		D_x\in\mathcal{N}_x^{\mathcal{M}}(s,C)
		\quad\text{and}\quad
		\varnothing\notin\mathcal{N}_x^{\mathcal{M}}(s,C).
		\]
		If $C\subseteq E$, then
		\[
		\mathcal{N}_x^{\mathcal{M}}(s,C)
		\subseteq
		\mathcal{N}_x^{\mathcal{M}}(s,E)
		\]
		by Theorem~\ref{thm:projected-value-effectivity}\textup{(ii)}. Taking the minimum of cardinalities over a larger family cannot increase the value, so
		\[
		\iota_x^{\mathcal{M}}(s,E)
		\leq
		\iota_x^{\mathcal{M}}(s,C).
		\]
		
		The equality
		\[
		\iota_x^{\mathcal{M}}(s,C)=1
		\]
		holds iff some singleton $\{c\}$ belongs to $\mathcal{N}_x^{\mathcal{M}}(s,C)$, which is equivalent to
		\[
		\mathcal{M},s\models[C](x{=}c).
		\]
		
		For $C=\varnothing$, the unique empty action has strategic value range
		\[
		R_x^{\varnothing}(s,\langle\rangle)=R_x^{\mathcal{M}}(s).
		\]
		Therefore Theorem~\ref{thm:operational-residual-indeterminacy} gives
		\[
		\iota_x^{\mathcal{M}}(s,\varnothing)
		=
		|R_x^{\mathcal{M}}(s)|.
		\]
		For $C=\Ag$, each complete action profile yields a singleton value range, because there is no complementary coalition left to vary. Since $\Sigma_{\Ag}$ is non-empty,
		\[
		\iota_x^{\mathcal{M}}(s,\Ag)=1.
		\]
	\end{proof}
	
	Thus $\iota_x^{\mathcal{M}}(s,C)$ measures the remaining uncertainty about the next value of $x$ after coalition $C$ has chosen an optimal value-restricting action. The case $\iota_x^{\mathcal{M}}(s,C)=1$ is exact singleton value control; larger values correspond to genuinely set-valued control.
	
	\subsection{De Re and De Dicto Value Control}
	\label{subsec:dere-dedicto}
	
	The value structure also supports a simple analogue of the classical \emph{de re}/\emph{de dicto} distinction. The issue is whether a coalition can commit to a particular value in advance, or only guarantee that the realised value will satisfy a value-indexed condition.
	
	\begin{definition}[De re and de dicto value control]
		\label{def:dere-dedicto}
		Let $C\subseteq\Ag$, let $x\in X$, and let
		\[
		\{\psi_c\}_{c\in D_x}
		\]
		be a family of formulas. The \emph{de re} formula is
		\[
		\bigvee_{c\in D_x}
		[C]\bigl((x{=}c)\land\psi_c\bigr),
		\]
		whereas the \emph{de dicto} formula is
		\[
		[C]\bigvee_{c\in D_x}
		\bigl((x{=}c)\land\psi_c\bigr).
		\]
	\end{definition}
	
	The de re formula says that coalition $C$ can choose a particular value $c$ in advance and force both $x{=}c$ and $\psi_c$. The de dicto formula says only that $C$ can force the value-indexed disjunction; the realised value may still depend on the complementary coalition's response.
	
	\begin{theorem}[De re implies de dicto]
		\label{thm:dere-dedicto}
		For every $C\subseteq\Ag$, $x\in X$, and family $\{\psi_c\}_{c\in D_x}$,
		\[
		\models_{\VCL,\Theta}
		\left(
		\bigvee_{c\in D_x}[C]\bigl((x{=}c)\land\psi_c\bigr)
		\right)
		\to
		[C]\bigvee_{c\in D_x}\bigl((x{=}c)\land\psi_c\bigr).
		\]
		The converse is not valid in general.
	\end{theorem}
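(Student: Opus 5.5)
The validity half follows from monotonicity of $[C]$ in its argument. The failure of the converse needs a concrete countermodel.

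\textbf{Validity.} Fix a $\VCL$ model $\mathcal{M}$ and a state $s$, and assume the de re antecedent holds at $s$. Then there is some $c_0\in D_x$ with $\mathcal{M},s\models[C]\bigl((x{=}c_0)\land\psi_{c_0}\bigr)$. Let $\alpha_C\in\Sigma_C$ witness this. Every successor $o(s,\alpha_C\sqcup\beta_{\overline C})$ then satisfies $(x{=}c_0)\land\psi_{c_0}$. Hence each such successor satisfies the disjunction $\bigvee_{c\in D_x}\bigl((x{=}c)\land\psi_c\bigr)$, and the same $\alpha_C$ witnesses the de dicto formula.

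Syntactically, this is the monotonicity of $[C]$, derivable from (C1) and (RE). It is also the standard fact that $\bigvee_c[C]\chi_c\to[C]\bigvee_c\chi_c$. Since $\mathcal{M}$ and $s$ were arbitrary, the implication is valid.

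\textbf{Failure of the converse.} I would use a two-agent model based on matching pennies.
\begin{itemize}
\item Take $\Ag=\{1,2\}$, $C=\{1\}$, a single variable $x$ with $D_x=\{a,b\}$, and $\psi_c\coloneqq\top$ for all $c$.
\item Let agent $1$ have one action $\alpha$, and agent $2$ have two actions $\beta_1,\beta_2$.
\item Use states $s,t_a,t_b$ with $\pi(s)(x)=\pi(t_a)(x)=a$ and $\pi(t_b)(x)=b$.
\item Put $o(s,\alpha\beta_1)=t_a$ and $o(s,\alpha\beta_2)=t_b$, and make all other transitions self-loops so that $o$ is total.
\end{itemize}

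In this model the de dicto formula becomes $[\{1\}]\bigvee_c(x{=}c)$. It holds at $s$ by exhaustivity (Theorem~\ref{thm:assignment-coherence}), and equivalently because $D_x\in\mathcal{N}_x^{\mathcal{M}}(s,\{1\})$.

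For the de re formula, consider the range of agent $1$'s only action. We have $R_x^{\{1\}}(s,\alpha)=\{a,b\}$, so $\iota_x^{\mathcal{M}}(s,\{1\})=2$ by Theorem~\ref{thm:operational-residual-indeterminacy}. By Proposition~\ref{prop:residual-indeterminacy}, no singleton is enforceable. Hence neither $[\{1\}](x{=}a)$ nor $[\{1\}](x{=}b)$ holds, and the de re formula fails at $s$.

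The only real work is building this countermodel. Choosing trivial $\psi_c$ isolates the gap between singleton value control and set-valued control, which is exactly the gap the residual-indeterminacy results already describe.
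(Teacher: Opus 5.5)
Your proof is correct and essentially the paper's: the validity half is the same witness-reuse argument, and your countermodel has the same shape, with agent $1$ having a single action and agent $2$ choosing between successors that carry different $x$-values. The differences are cosmetic. You use $D_x=\{a,b\}$ with all $\psi_c=\top$, whereas the paper adds a third value $c$ with $\psi_c=\bot$ that plays no essential role. You get the failure of de re from $\iota_x^{\mathcal{M}}(s,\{1\})=2$ rather than by direct inspection. Calling the model ``matching pennies'' is a misnomer, since agent $1$ has no choice.
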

	
	\begin{proof}
		Suppose
		\[
		\mathcal{M},s\models
		[C]\bigl((x{=}c_0)\land\psi_{c_0}\bigr)
		\]
		for some $c_0\in D_x$. Since
		\[
		(x{=}c_0)\land\psi_{c_0}
		\]
		implies
		\[
		\bigvee_{c\in D_x}
		\bigl((x{=}c)\land\psi_c\bigr),
		\]
		the same witnessing $C$-action guarantees the latter disjunction. Hence the displayed implication is valid.
		
		For failure of the converse, take
		\[
		\Ag=\{1,2\},
		\qquad
		X=\{x\},
		\qquad
		D_x=\{a,b,c\},
		\]
		and let $C=\{1\}$. Let
		\[
		\Sigma_1=\{\alpha\},
		\qquad
		\Sigma_2=\{\beta_a,\beta_b\}.
		\]
		Consider states
		\[
		S=\{s,s_a,s_b\}.
		\]
		Define the outcome function at $s$ by
		\[
		o(s,\alpha\sqcup\beta_a)=s_a,
		\qquad
		o(s,\alpha\sqcup\beta_b)=s_b,
		\]
		and complete it by self-loops at $s_a$ and $s_b$. Let
		\[
		\pi(s_a)(x)=a,
		\qquad
		\pi(s_b)(x)=b,
		\qquad
		\pi(s)(x)=c.
		\]
		Finally set
		\[
		\psi_a=\top,
		\qquad
		\psi_b=\top,
		\qquad
		\psi_c=\bot.
		\]
		
		Then at $s$, coalition $\{1\}$ has its unique action $\alpha$, and every response of agent $2$ leads to a state satisfying
		\[
		((x{=}a)\land\psi_a)
		\lor
		((x{=}b)\land\psi_b)
		\lor
		((x{=}c)\land\psi_c).
		\]
		Therefore
		\[
		\mathcal{M},s\models
		[\,\{1\}\,]
		\bigvee_{d\in D_x}
		\bigl((x{=}d)\land\psi_d\bigr).
		\]
		However,
		\[
		\mathcal{M},s\not\models[\,\{1\}\,](x{=}a),
		\qquad
		\mathcal{M},s\not\models[\,\{1\}\,](x{=}b),
		\]
		because agent $2$ can choose the response leading to the other value. Also,
		\[
		\mathcal{M},s\not\models[\,\{1\}\,]\bigl((x{=}c)\land\psi_c\bigr),
		\]
		since $\psi_c=\bot$. Hence the de dicto formula holds at $s$, while the de re formula fails.
	\end{proof}
	
	The failure of the converse is the familiar failure of finite additivity for coalition ability:
	\[
	[C](x\in A)
	\not\Rightarrow
	\bigvee_{c\in A}[C](x{=}c).
	\]
	A coalition may force the next value of $x$ into a finite region $A$ without being able to force any particular value in $A$. In terms of residual indeterminacy, this is precisely the difference between set-valued control and singleton value control.

	\section{Conclusion and Future Work}
	\label{sec:conclusion}
	
	This paper introduced Value Coalition Logic ($\VCL$), a typed assignment-based reconstruction of classical Coalition Logic. The language replaces flat propositional atoms with sorted value atoms $(x{=}c)$ over finite domains, while preserving the ordinary one-step semantics of coalitional ability. Thus the strategic meaning of $[C]\varphi$ is unchanged: coalition $C$ has a joint action guaranteeing $\varphi$ against all actions of the complementary coalition. The change lies in the atomic state-description layer. Each state carries a total typed assignment, so exhaustivity and mutual exclusion of alternative values are built into the semantics rather than imposed as external propositional coherence constraints.
	
	We proved that, over a fixed finite typed signature, $\VCL$ corresponds exactly to propositional Coalition Logic over coherent valuations. Under the translation
	\[
	(x{=}c)\mapsto p_x^c,
	\]
	truth is preserved and reflected between $\VCL$ models and coherent propositional coalition models over the same strategic component. The two model constructions are inverse on the intended coherent class. This correspondence yields a Hilbert-style axiomatisation obtained by adding finite-domain value-coherence axioms to the standard axioms of Coalition Logic. In this precise sense, $\VCL$ is conservative at the level of the coalition modality: it does not add a new strategic semantics, but isolates the coherent finite-value fragment of the propositional presentation and treats it as typed semantic structure.
	
	The main structural contribution was the value-level analysis of ordinary coalitional ability. For each state $s$ and variable $x$, the underlying one-step game form induces an $x$-value quotient, or value-projection, game form over $D_x$. Projected value-effectivity families are exactly the ordinary effectivity families of these quotient game forms. Equivalently, they are the upward closures of the strategic value ranges generated by concrete coalition actions. This gives an operational hypergraph representation of value control: a coalition can enforce a value region precisely when it has an action whose possible next $x$-values are all contained in that region.
	
	From this representation we derived set-valued strategic exclusion, transversal polarity between disjoint coalitions, exact boundary duality between $\varnothing$ and $\Ag$, residual value indeterminacy, and a de re/de dicto contrast for value control. These results are not additional modal principles. They are value-level invariants inherited from ordinary one-step game-form effectivity after projection to finite value domains.
	
	Several directions remain open. First, one may extend the language with controlled forms of quantification over finite value domains. Such an extension would allow the de re/de dicto distinction to be expressed more directly in the object language, separating the ability to enforce a condition for some fixed value from the ability to ensure that some suitable value is realised.
	
	Second, the value-quotient construction suggests a representation problem for projected value-effectivity. One may ask which systems of upward-closed families over $\mathcal{P}(D_x)$ can arise from a single finite value-quotient game form, and how their minimal generators, transversal duals, and residual-indeterminacy indices classify different patterns of value control.
	
	Third, the present paper analysed one variable at a time. A natural next step is to study multi-variable value-effectivity, where coalitions enforce partial assignments or regions over products of value domains. This would connect the framework with dependency analysis, constraint-based specifications, and structured models of multi-agent systems. Typed value atoms could also be combined with temporal, epistemic, imperfect-information, quantitative, or resource-sensitive extensions of Coalition Logic, yielding systems for reasoning about value trajectories, partial observability, costs, preferences, and long-term strategic control.


\begin{thebibliography}{99}
	
	\bibitem{Pauly02}
	Marc Pauly.
	\newblock A modal logic for coalitional power in games.
	\newblock \emph{Journal of Logic and Computation}, 12(1):149--166, 2002.
	\newblock doi:\href{https://doi.org/10.1093/logcom/12.1.149}{10.1093/logcom/12.1.149}.
	
	\bibitem{Pauly02Comp}
	Marc Pauly.
	\newblock On the complexity of coalitional reasoning.
	\newblock \emph{International Game Theory Review}, 4(3):237--254, 2002.
	\newblock doi:\href{https://doi.org/10.1142/S0219198902000677}{10.1142/S0219198902000677}.
	
	\bibitem{ATL02}
	Rajeev Alur, Thomas A.\ Henzinger, and Orna Kupferman.
	\newblock Alternating-time temporal logic.
	\newblock \emph{Journal of the ACM}, 49(5):672--713, 2002.
	\newblock doi:\href{https://doi.org/10.1145/585265.585270}{10.1145/585265.585270}.
	
	\bibitem{CHP10}
	Krishnendu Chatterjee, Thomas A.\ Henzinger, and Nir Piterman.
	\newblock Strategy logic.
	\newblock \emph{Information and Computation}, 208(6):677--693, 2010.
	\newblock doi:\href{https://doi.org/10.1016/j.ic.2009.07.004}{10.1016/j.ic.2009.07.004}.
	
	\bibitem{MMPV14}
	Fabio Mogavero, Aniello Murano, Giuseppe Perelli, and Moshe~Y.\ Vardi.
	\newblock Reasoning about strategies: On the model-checking problem.
	\newblock \emph{ACM Transactions on Computational Logic}, 15(4):34:1--34:47, 2014.
	\newblock doi:\href{https://doi.org/10.1145/2631917}{10.1145/2631917}.
	
	\bibitem{HHMW01}
	Harrenstein, P., van der Hoek, W., Meyer, J. J., and Witteveen, C.
	Boolean games.
	In \emph{Proceedings of the 8th Conference on Theoretical Aspects of Rationality and Knowledge}, pages 287--298, 2001.
	
	\bibitem{BLLZ06}
	Elise Bonzon, Marie-Christine Lagasquie-Schiex, Jérôme Lang, and Bruno Zanuttini.
	\newblock Boolean games revisited.
	\newblock In \emph{ECAI 2006: 17th European Conference on Artificial Intelligence}, volume 141, page 265. SAGE Publications Limited, 2006.
	
	\bibitem{DHKW08}
	Paul~E.\ Dunne, Wiebe van der Hoek, Sarit Kraus, and Michael Wooldridge.
	\newblock Cooperative boolean games.
	\newblock In \emph{Proceedings of the 7th international joint conference on Autonomous agents and multiagent systems-Volume 2}, pages 1015--1022. 2008.
	
	\bibitem{CLPC05}
	Wiebe van der Hoek and Michael~J.\ Wooldridge.
	\newblock On the logic of cooperation and propositional control.
	\newblock \emph{Artificial Intelligence}, 164(1--2):81--119, 2005.
	\newblock doi:\href{https://doi.org/10.1016/j.artint.2005.01.003}{10.1016/j.artint.2005.01.003}.
	
	\bibitem{Gerbrandy06}
	Jelle Gerbrandy.
	\newblock Logics of propositional control.
	\newblock In \emph{Proceedings of the 5th International Joint Conference on Autonomous Agents and Multiagent Systems (AAMAS 2006)}, pages 193--200. ACM, 2006.
	
	\bibitem{BH16}
	Francesco Belardinelli and Andreas Herzig.
	\newblock On logics of strategic ability based on propositional control.
	\newblock In \emph{Proceedings of the 25th International Joint Conference on Artificial Intelligence (IJCAI 2016)}, pages 95--101. IJCAI/AAAI Press, 2016.
	
	\bibitem{GJT13}
	Valentin Goranko, Wojciech Jamroga, and Paolo Turrini.
	\newblock Strategic games and truly playable effectivity functions.
	\newblock \emph{Autonomous Agents and Multi-Agent Systems}, 26(2):288--314, 2013.
	\newblock doi:\href{https://doi.org/10.1007/s10458-012-9192-y}{10.1007/s10458-012-9192-y}.
	
	\bibitem{MMPV17}
	Fabio Mogavero, Aniello Murano, Giuseppe Perelli, and Moshe~Y.\ Vardi.
	\newblock Reasoning about strategies: On the satisfiability problem.
	\newblock \emph{Logical Methods in Computer Science}, 13(1):1--37, 2017.
	\newblock doi:\href{https://doi.org/10.23638/LMCS-13(1:9)2017}{10.23638/LMCS-13(1:9)2017}.
	
	\bibitem{AGJ07}
	Thomas {\AA}gotnes, Valentin Goranko, and Wojciech Jamroga.
	\newblock Alternating-time temporal logics with irrevocable strategies.
	\newblock In \emph{Proceedings of the 11th Conference on Theoretical Aspects of Rationality and Knowledge (TARK 2007)}, pages 15--24, 2007.
	
	\bibitem{BJ14}
	Nils Bulling and Wojciech Jamroga.
	\newblock Comparing variants of strategic ability: How uncertainty and memory influence general properties of games.
	\newblock \emph{Autonomous Agents and Multi-Agent Systems}, 28(3):474--518, 2014.
	\newblock doi:\href{https://doi.org/10.1007/s10458-013-9231-3}{10.1007/s10458-013-9231-3}.
	
	\bibitem{BMMRV17}
	Rapha{\"e}l Berthon, Bastien Maubert, Aniello Murano, Sasha Rubin, and Moshe~Y.\ Vardi.
	\newblock Strategy logic with imperfect information.
	\newblock In \emph{Proceedings of the 32nd Annual ACM/IEEE Symposium on Logic in Computer Science (LICS 2017)}, pages 1--12. IEEE Computer Society, 2017.
	\newblock doi:\href{https://doi.org/10.1109/LICS.2017.8005136}{10.1109/LICS.2017.8005136}.
	
	\bibitem{CLMM18}
	Petr Cerm{\'a}k, Alessio Lomuscio, Fabio Mogavero, and Aniello Murano.
	\newblock Practical verification of multi-agent systems against {SLK} specifications.
	\newblock \emph{Information and Computation}, 261:588--614, 2018.
	\newblock doi:\href{https://doi.org/10.1016/j.ic.2017.09.011}{10.1016/j.ic.2017.09.011}.
	
	\bibitem{BLLMY22}
	Francesco Belardinelli, Alessio Lomuscio, Vadim Malvone, and Emily Yu.
	\newblock Approximating perfect recall when model checking strategic abilities: Theory and applications.
	\newblock \emph{Journal of Artificial Intelligence Research}, 73:897--932, 2022.
	\newblock doi:\href{https://doi.org/10.1613/jair.1.12539}{10.1613/jair.1.12539}.
	
	\bibitem{KJ24}
	Damian Kurpiewski, Mateusz Kami{\'n}ski, and Wojciech Jamroga.
	\newblock {STV+FLY}: On-the-fly model checking of strategic ability in multi-agent systems.
	\newblock In \emph{ECAI 2024 --- 27th European Conference on Artificial Intelligence}, pages 4483--4486. IOS Press, 2024.
	\newblock doi:\href{https://doi.org/10.3233/FAIA241035}{10.3233/FAIA241035}.
	
	\bibitem{BFJMM23}
	Francesco Belardinelli, Angelo Ferrando, Wojciech Jamroga, Vadim Malvone, and Aniello Murano.
	\newblock Scalable verification of strategy logic through three-valued abstraction.
	\newblock In \emph{Proceedings of the 32nd International Joint Conference on Artificial Intelligence (IJCAI 2023)}, pages 46--54. IJCAI, 2023.
	\newblock doi:\href{https://doi.org/10.24963/ijcai.2023/6}{10.24963/ijcai.2023/6}.
	
	\bibitem{BJMM19}
	Francesco Belardinelli, Wojciech Jamroga, Vadim Malvone, and Aniello Murano.
	\newblock Strategy logic with simple goals: Tractable reasoning about strategies.
	\newblock In \emph{28th International Joint Conference on Artificial Intelligence (IJCAI 2019)}, pages 88--94. 2019.
	
	\bibitem{JKKP20}
	Wojciech Jamroga, Beata Konikowska, Damian Kurpiewski, and Wojciech Penczek.
	\newblock Multi-valued verification of strategic ability.
	\newblock \emph{Fundamenta Informaticae}, 175(1--4):207--251, 2020.
	\newblock doi:\href{https://doi.org/10.3233/FI-2020-1955}{10.3233/FI-2020-1955}.
	
	\bibitem{BG22}
	Nils Bulling and Valentin Goranko.
	\newblock Combining quantitative and qualitative reasoning in concurrent multi-player games.
	\newblock \emph{Autonomous Agents and Multi-Agent Systems}, 36:Article~2, 2022.
	\newblock doi:\href{https://doi.org/10.1007/s10458-021-09531-9}{10.1007/s10458-021-09531-9}.
	
		
	\bibitem{LXL20}
	Zhaoshuai Liu, Liping Xiong, Yongmei Liu, Yves Lespérance, Ronghai Xu, and Hongyi Shi.
	\newblock A Modal Logic for Joint Abilities under Strategy Commitments.
	\newblock In \emph{Proceedings of the Twenty-Ninth International Joint Conference on Artificial Intelligence (IJCAI 2020)}, pages 1805--1812. 2020.
	\newblock doi:\href{https://doi.org/10.24963/ijcai.2020/250}{10.24963/ijcai.2020/250}.
	
	\bibitem{Alechina09RBCL}
	Natasha Alechina, Brian Logan, Hoang Nga Nguyen, and Abdur Rakib.
	\newblock Logic for coalitions with bounded resources.
	\newblock \emph{Journal of Logic and Computation}, 21(6):907--937, 2011.
	
	\bibitem{Alechina10RBATL}
	Natasha Alechina, Brian Logan, Nguyen Hoang Nga, and Abdur Rakib.
	\newblock Resource-bounded alternating-time temporal logic.
	\newblock In \emph{Proceedings of the 9th International Conference on Autonomous Agents and Multiagent Systems: volume 1-Volume 1}, pages 481--488. 2010.
	
\end{thebibliography}
\end{document}